\tikzset{
    vertex/.style={},
    edge/.style={->}
}
\newtheorem*{rep@theorem}{\rep@title}
\newcommand{\newreptheorem}[2]{%
\newenvironment{rep#1}[1]{%
 \def\rep@title{#2 \ref{##1}}%
 \begin{rep@theorem}}%
 {\end{rep@theorem}}}
\newcommand{\prob}{\mathbb{P}}
\newcommand{\probsymbol}{\mathbf{P}}
\newcommand{\ETR}{\exists\mathbb{R}}
\newcommand{\NP}{\mathsf{NP}}
\newcommand{\PSPACE}{\mathsf{PSPACE}}
\newcommand{\SAT}{\mathsf{SAT}}
\newtheorem{theorem}{Theorem}
\newtheorem{lemma}{Lemma}[section]
\newtheorem{proposition}[lemma]{Proposition}
\theoremstyle{definition}
\newtheorem{example}[lemma]{Example}
\newtheorem{definition}[lemma]{Definition}
\title{Is Causal Reasoning Harder than Probabilistic Reasoning?}
\author{Milan Mossé, Duligur Ibeling, Thomas Icard}
\date{}
\begin{document}

\maketitle

\begin{abstract} Many tasks in statistical and causal inference can be construed as problems of \emph{entailment} in a suitable formal language. We ask whether those problems are more difficult, from a computational perspective, for \emph{causal} probabilistic languages than for pure probabilistic (or ``associational'') languages. Despite several senses in which causal reasoning is indeed more complex---both expressively and inferentially---we show that causal entailment (or satisfiability) problems can be systematically and robustly reduced to purely probabilistic problems. Thus there is no jump in computational complexity. 
Along the way we answer several open problems concerning the complexity of well known probability logics, in particular demonstrating the $\ETR$-completeness of a polynomial probability calculus, as well as a seemingly much simpler system, the logic of comparative conditional probability. 
\end{abstract}

\section{Motivation and Preview}

There is an uncontroversial sense in which causal reasoning is more difficult than purely  probabilistic or statistical reasoning. The latter seems hard enough: estimating probabilities, predicting future events from past observations, determining statistical significance, adjudicating between statistical hypotheses---these are already formidable tasks, long mired in controversy. No free lunch theorems \citep{MLbook,BELOT2020159} show that strong assumptions are necessary to gain any inductive purchase on such problems, and there is considerable disagreement about what kinds of assumptions are reasonable in different epistemic and practical circumstances \citep{Efron}. Problems of causal inference only seem to make our tasks harder. Inferring causal effects, predicting the outcomes of interventions, determining causal direction, learning a causal model---these problems typically demand statistical reasoning, but they also demand more on the part of the investigator. They may require that we actively interrogate the world through deliberate experimentation rather than passive observation, or that we antecedently accept strong assumptions sufficient to justify the causal conclusions we want to reach, or (very often) both. Indeed, statistical indistinguishability is the norm in causal inference, even with substantive assumptions \citep{spirtes2000causation}. As formalized in the causal hierarchy theorem of \cite{bareinboim2020pearl} (see also \citealt{II21}), it is not only impossible to infer causal information from purely correlational (or ``observational'') data, but also generically impossible to infer counterfactual or explanatory information from purely experimental (or ``interventional'') data. From an inferential perspective, probabilistic information vastly underdetermines causal information.

A feature common to both statistical inference and causal inference is that the most prominent approaches to each can be understood, at least in part, as attempts to turn an inductive problem into a deductive one. This is famously true of frequentist methods in the tradition associated with Neyman and Pearson (see \citealt{Neyman}), but is arguably true of Bayesian approaches as well. As \cite{GelmanShalizi} suggest, ``Statistical models are tools that let us draw inductive inferences on a deductive background,'' rendering statistical inferences  ``deductively guaranteed by probabilistic assumptions'' (p. 27). 
Indeed, one of the benefits of specifying a Bayesian probability model is that it provides an answer to virtually any question about the probability of a hypothesis conditional on data. Given the model and the data, this answer follows as a matter of logic. 

Causal underdetermination is likewise confronted with methods for formulating precise inductive assumptions, sometimes allowing answers to causal questions to be derived by mere calculation. 
\begin{example}[Do-calculus] \label{ex:do} As one prominent example, the \emph{do-calculus} of Pearl and collaborators (see \citealt{pearl1995} and Ch. 3 of \citealt{Pearl2009}) establishes systematic correspondences between qualitative (``graphical'') properties of a causal scenario and certain conditional independence statements involving causal quantities. 
A typical causal quantity of interest is the \emph{(average) causal effect}, e.g., how likely $Y$ is to take on value $y$ given an intervention setting $X$ to $x$. In a formal language (introduced in the sequel as $\mathcal{L}_{\mathrm{\mathrm{causal}}}$), we write this as $\probsymbol([X=x]Y=y)$, or more briefly, $\probsymbol([x]y)$.  

Absent assumptions, it is \emph{never}  possible to infer the value of $\probsymbol([x]y)$ from observational data \citep{bareinboim2020pearl}. 
Suppose, however, that we could assume the causal structure has something like the following shape (known in the literature as the \emph{front door  graph}):  
\begin{figure}[h]
\begin{center}
\begin{tikzpicture}
  \node (s0) [draw=black,circle] at (0,0) {$X$};
  
  \node (s2) [draw=black,circle] at (2,0) {$Z$};
  
  \node (s3) [draw=black,circle] at (4,0)  {$Y$};
  
  \node (s4) [draw=black,circle] at (2,1.25) {$U$};
 
  \path (s0) edge[->,thick] (s2);
  
  \path (s2) edge[->,thick] (s3);
  
    \path (s4) edge[->,thick] (s0);
  
  \path (s4) edge[->,thick] (s3);
  
 \end{tikzpicture} 
 \end{center} \vspace{-.2in}
 \end{figure} 
 
 \noindent For a standard example, we might assume that any causal effect of smoking ($X$) on cancer ($Y$) will be mediated by tar deposited in the lungs ($Z$), 
 and moreover that any unknown sources of variation ($U$) on $X$ or on $Y$ (or on both), such as a person's genotype, do not directly influence $Z$. Under these circumstances, the do-calculus licenses several substantive causal assumptions, which may be rendered precisely in $\mathcal{L}_{\mathrm{\mathrm{causal}}}$. Let $\Gamma$ be the set of equality statements below: 
 \begin{multicols}{2}
 \begin{enumerate}[label=(\roman*)]
     \item \label{do-1} $\probsymbol([x]z) = \probsymbol(z|x)$
     \item $\probsymbol([z]x) = \probsymbol(x)$
     \item $\probsymbol([x]y  | [x]z)=\probsymbol([x,z]y)=\probsymbol([z]y)$
     \item \label{do-4} $\probsymbol([z]y  | [z]x)=\probsymbol(y|x,z)$
 \end{enumerate}
 \end{multicols}
\noindent For instance, \ref{do-1} says that the causal effect of $X=x$ on $Z=z$ simply coincides with the conditional probability $\prob(Z=z|X=x)$. Appealing to a combination of laws of probability and distinctively causal laws involving the ``causal-conditional'' statements like $[x]y$, it is possible to show that the following equality is in fact \emph{entailed} by the statements $\Gamma$, that is, by \ref{do-1}-\ref{do-4}:
\begin{equation} \label{eq:do} \probsymbol([x]y) \;\; = \;\; \sum_z \probsymbol(z|x) \sum_{x'} \probsymbol(y|x',z) \probsymbol(x'). \end{equation}
In other words, (\ref{eq:do}) shows that the causal effect of $X=x$ on $Y=y$ can simply be calculated from suitable observational data involving the variables $X,Y,Z$. \end{example}

Methods such as these extend beyond the specific problem of estimating causal effects, to include estimation of \emph{counterfactual} quantities as well. For instance, we may want to determine---from experimental data and background assumptions---the joint probability that an individual would survive if \emph{and only if} they are assigned a certain treatment, a quantity we would write as $\probsymbol([X=1]Y=1 \wedge [X=0]Y=0)$. Inferential techniques similar to those in Example \ref{ex:do} have been employed in such settings, and have even been automated (e.g., \citealt{Duarte}).

More broadly, a number of different approaches to inductive inference, both statistical and causal, can be assimilated to a regiment something like this: \begin{equation} \mbox{Inductive Assumptions } + \mbox{ Data } \models \mbox{ Inferential Conclusion}\label{mainequation}\end{equation} In Example~\ref{ex:do}, $\Gamma$ are the inductive assumptions, the data would be information about $\prob(X,Y,Z)$, and the conclusion would be an estimate of the causal effect of $X=x$ on $Y=y$. In a standard Bayesian analysis, the inductive assumption might be a prior probability model for some latent variables (e.g., parameters for a class of probability measures), while the data would be values of some observable variables, and the conclusion might be the posterior values for the hidden variables, or perhaps posterior predictive values for some yet-to-be-observed variables. A critical job of the statistician or data scientist is to identify suitable inductive assumptions that a relevant party judges reasonable (or, ideally if feasible, which are themselves empirically verifiable) and that are sufficiently strong to license meaningful conclusions from the types of data available. 

From this vantage point our titular question takes on a new significance. Rather than asking about the difficulty of an inference task in terms of the \emph{strength of assumptions} needed to justify the inference, we could instead ask how difficult it is in general, \emph{computationally speaking}, to reason from inductive assumptions (together with data) to an inferential conclusion, in the strong sense of (\ref{mainequation}). 
In other words, we ask how difficult questions like (\ref{mainequation}) could be across different logical languages for describing relevant assumptions, data, and conclusions. 

The contrast of interest in this article is between languages $\mathcal{L}_{\mathrm{prob}}$, suitable for probabilistic reasoning, and languages $\mathcal{L}_{\mathrm{causal}}$, which extend the corresponding probabilistic languages to encompass causal reasoning in addition. In short, $\mathcal{L}_{\mathrm{prob}}$ encompasses ``pure'' probabilistic reasoning about some set of random variables. In $\mathcal{L}_{\mathrm{causal}}$ we also reason about the probabilities of \emph{causal conditionals}, the causal effect $\probsymbol\big([x]y\big)$ being a simple example. Such mixed reasoning is crucial for applications like the do-calculus, where causal conclusions depend on distinctively causal assumptions (such as \ref{do-1}-\ref{do-4} in Example \ref{ex:do}). Some of the emblematic principles of $\mathcal{L}_{\mathrm{causal}}$ reveal a subtle interplay between the probabilistic and causal-conditional components. For example, the following formula states that if causal interventions which set the values of the variable $X$ thereby affect the values taken the variable $Y$, then the converse cannot be true:\begin{equation} \label{eq:rec} \probsymbol\big([x]y \wedge [x']y'\big)>0 \rightarrow \probsymbol\big([y]x \wedge [y']x'\big)=0.
\end{equation} This formula emerges as an instance of a more general scheme in a complete axiomatization of $\mathcal{L}_{\mathrm{causal}}$ (see \citealt{ibeling2020probabilistic}), implying that $X$ and $Y$ cannot each causally affect the other.

In light of the considerable \emph{empirical} (and expressive) gulf between these two kinds of languages, we might expect to see a parallel jump in computational complexity when moving from $\mathcal{L}_{\mathrm{prob}}$ to $\mathcal{L}_{\mathrm{causal}}$. In a certain respect, $\mathcal{L}_{\mathrm{causal}}$ can be seen as a \emph{combination} of logics, embedding one modal system (a conditional logic) inside another (a probability logic),  with non-trivial interactions between the two (such as (\ref{eq:rec})). It is common wisdom that such combinations may in general drive up complexity, in some cases even resulting in undecidability (see, e.g., \citealt{Kurucz}). As a famous example, even seemingly innocuous combinations of modalities for knowledge and time (each independently of low complexity) can lead to $\Pi^1_1$-hardness \citep{HalpernVardi}. 
The present work introduces two main results, which show that this does not happen here: causal reasoning and probabilistic reasoning are, in a precise and robust sense, equally difficult.

The distinction between  $\mathcal{L}_{\mathrm{prob}}$ and $\mathcal{L}_{\mathrm{causal}}$ is orthogonal to another distinction, namely \emph{how much arithmetic} we admit in our formal language of probability over a set of probability terms $\probsymbol(\delta)$. A wide range of probability logics have been studied in the literature, from pure qualitative comparisons between probability terms (e.g., \citealt{Finetti1937}) to richer fragments capable of reasoning about polynomials over such terms (e.g., \citealt{Scott1966AssigningPT}).  For any such choice $\mathcal{L}_{\mathrm{prob}}$ of probabilistic language we can consider the extension $\mathcal{L}_{\mathrm{causal}}$ to allow not only probability terms, but also causal-probability terms like those introduced above. A strength of our analysis is that we provide a complexity-reflecting reduction from $\mathcal{L}_{\mathrm{causal}}$ to $\mathcal{L}_{\mathrm{prob}}$ in a way that is independent of our choice of probabilistic primitives. Thus, across the landscape of probability logics, we see no increase in complexity. Summarizing, our main result states:




\begin{theorem}[Informal]\label{characterization}
Probabilistic reasoning is no harder than causal reasoning. In particular:
\begin{enumerate}
    \item 
    Reasoning about (causal or non-causal) probabilities is as hard as reasoning about sums of (causal or non-causal) probabilities; both are as hard as reasoning about Boolean formulas.
    \item \label{enum-second}
    Reasoning about (causal or non-causal) conditional probabilities is as hard as reasoning about arbitrary polynomials in (causal or non-causal) probabilities; both are as hard as reasoning about arbitrary polynomials in real numbers.
\end{enumerate}
\end{theorem}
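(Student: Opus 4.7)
The theorem bundles two complexity claims---$\NP$-completeness for the sum/Boolean fragments and $\ETR$-completeness for the polynomial/conditional fragments---each in both the probabilistic and causal versions of the language. The plan is to establish them in three pieces: (a) embed $\SAT$ and the existential theory of the reals into $\mathcal{L}_{\mathrm{prob}}$ to obtain hardness, (b) place $\mathcal{L}_{\mathrm{causal}}$-satisfiability into $\NP$ or $\ETR$ via a small-model translation to polynomial feasibility over the reals, and (c), the real heart of the argument, construct a polynomial-time, complexity-reflecting reduction from $\mathcal{L}_{\mathrm{causal}}$ to $\mathcal{L}_{\mathrm{prob}}$.

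Steps (a) and (b) I expect to be relatively routine. For hardness, $\SAT$ embeds by asserting $\probsymbol(\varphi) = 1$ and reading Boolean connectives off the support; an arbitrary system of polynomial (in)equalities over the reals embeds by regarding the real unknowns as probabilities of distinct atomic events, rescaled into $[0,1]$ and offset to cover both signs. For the upper bound, I would introduce real unknowns for the joint distribution on a polynomially bounded set of atoms---in the causal case, joint settings of the endogenous variables under each intervention appearing in the formula---translate probability terms into sums over atoms, and encode the formula as a Boolean combination of polynomial (in)equalities. Membership in $\ETR$ is then immediate. For the sum-only fragment, once a truth value is guessed for each atomic comparison the remaining system is linear, yielding membership in $\NP$.

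The main obstacle, and the place where the distinctive content of the theorem lives, is step (c). My approach is to introduce, for each intervention $[x_i]$ occurring syntactically in the input formula, a fresh copy $V^{x_i}$ of every endogenous variable in an extended probabilistic signature, and then translate each subterm of the form $\probsymbol([x_i]\psi)$ into the purely probabilistic $\probsymbol(\psi^{x_i})$, where $\psi^{x_i}$ renames endogenous atoms to their $x_i$-counterparts (nested interventions compose by nested renaming). This translated formula is conjoined with polynomially many instances of the distinctively causal axioms of $\mathcal{L}_{\mathrm{causal}}$---effectiveness, composition, the recursivity constraint exemplified by (\ref{eq:rec}), and so on---drawn from the complete axiomatization of \citet{ibeling2020probabilistic}, instantiated only at interventions that occur in the input. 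Soundness gives the forward direction immediately; the hard direction is to build, from any probabilistic model of the translation, a genuine structural causal model whose induced potential-outcome distribution agrees with it on all translated terms---a canonical-model construction driven by the axiomatic completeness theorem. The subtle point for the theorem as stated is uniformity across arithmetic levels: the instantiated axioms must themselves live within the ambient fragment, so that the same reduction simultaneously transports the $\NP$-bound in item (1) and the $\ETR$-bound in item (2) without moving between classes.
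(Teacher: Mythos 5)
Your overall architecture (hardness, membership via small models, and a reduction from the causal to the probabilistic language) matches the paper's, and your step (b) is essentially the paper's small-model property. But two of your steps have genuine gaps, and they are exactly where the non-routine content of the theorem lives.

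First, your hardness argument does not deliver item (2) as stated. Embedding a polynomial system by treating the real unknowns as probabilities of atomic events requires multiplication in the language, so it establishes $\ETR$-hardness only of $\mathcal{L}^{\mathrm{poly}}_{\mathrm{prob}}$. The substantive claim in item (2) is that the \emph{conditional} language---which admits only comparisons between terms $\probsymbol(\delta\mid\delta')$, with no explicit $+$ or $\cdot$---is already $\ETR$-hard. The paper proves this by reducing from a normal form of the existential theory of the reals (equations $x_i+x_j=x_k$ and $x_i x_j=1$ with $x_i\in[\nicefrac{1}{2},2]$, Lemma~\ref{etr-inv}), encoding $x_i x_j=1$ by an independence statement $\probsymbol(\delta_i\mid\delta_j)=\probsymbol(\delta_i)$ together with $\probsymbol(\delta_i\wedge\delta_j)=\nicefrac{1}{4n^2}$, encoding addition by disjoint events, and encoding the rational constants by equiprobable partitions. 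Nothing in your sketch supplies this idea, and without it the claimed equivalence between conditional-probability reasoning and polynomial reasoning is unproved.

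Second, your step (c)---a deterministic polynomial-time translation that renames interventional atoms to fresh variables and conjoins polynomially many instances of the axioms of \cite{ibeling2020probabilistic}---founders on recursivity. To go from a probabilistic model of the translation back to a genuine SCM you need the model to be \emph{recursive}, i.e., compatible with some acyclic ordering of the variables; excluding all influence cycles by axiom instances expressible in the ambient fragment appears to require exponentially many instances (or a disjunction over all $n!$ orderings), and the completeness theorem you invoke holds for the full schema, not for a polynomially chosen fragment of it. The paper avoids this by making the reduction \emph{nondeterministic}: the certificate contains a candidate ordering $\prec$ together with the at most $|\varphi|$ complete state descriptions receiving positive mass, compatibility with $\prec$ is checked in polynomial time (Lemma~\ref{check-ordering-of-delta}), a matching SCM is then constructed (Lemma~\ref{model-from-prob}), and one concludes by the closure of $\NP$ and $\ETR$ under many-one $\NP$-reductions; the deterministic reduction is recovered only a posteriori from the completeness results. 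If you wish to keep your renaming-based translation, you would at minimum need to fold a guessed ordering into an $\NP$ certificate and prove an analogue of the existence-of-model lemma for your renamed signature.
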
 While the relationship between probabilistic and causal languages is our main focus, it is worth pointing out that some of our results are of interest beyond the connection with causality. In particular, we find that reasoning in the language of conditional comparative probability is precisely as hard as reasoning in the full existential first-order theory of real numbers ($\mathsf{\exists}\mathbb{R}$), thus establishing another notable example of a problem complete for this complexity class. It is also noteworthy that this expressively weak probabilistic language is---from a computational perspective---as complex as the most expressive \emph{causal} languages we consider in the paper (namely, $\mathcal{L}_{\mathrm{causal}}^{\mathrm{poly}}$).


\subsubsection*{Relation to previous work} There is a long line of work on probability logic, including a host of results about complexity \citep{fagin1990logic,Abadi,PL,Speranski}. As just mentioned, our contribution advances this literature. Concerning causal reasoning, there have been a number of complexity studies for various non-probabilistic causal notions \citep{Eiter,Aleksandrowicz}. Most germane to the present study is Halpern's \citeyearpar{Halpern2000} analysis of the satisfiability problem for deterministic reasoning about causal models, which he shows to be $\NP$-complete (the same as propositional logical reasoning).  \cite{Eiter} studied numerous model-checking queries in a probabilistic setting, including the problem of determining the probability of a specific causal query. They show that this problem is complete for the class $\#\mathsf{P}$, the ``counting analogue'' to $\NP$ which also characterizes the problem of determining (approximations for) probabilities of (even very simple) propositional expressions \citep{ROTH1996273}.

Our interest in the present contribution is the complexity of reasoning---viz. testing for satisfiability, validity, or entailment, as portrayed in (\ref{mainequation})---for probabilistic and causal languages. While this angle has not yet been explored thoroughly in the literature, our study is indebted to, and draws upon, much of this previous work. 
Theorem~\ref{characterization} synthesizes as well as greatly extends a heretofore piecemeal line of results \citep{fagin1990logic,ibeling18,ibeling2020probabilistic}. Moreover, the results just mentioned by \cite{Halpern2000} and by \cite{Eiter}---see also \cite{darwiche2022causal}---could be said to lend further support to the claim that causal reasoning is no more difficult (in the sense of computational complexity) than purely probabilistic reasoning.

\subsubsection*{Overview of the paper} 
In the next two sections (\S\ref{section:into} and \S\ref{section:complexity}), we introduce the languages and the notions from computational complexity needed to state Theorem~\ref{characterization} more formally. The proof of this main result appears in \S\ref{section:mainresult}. Finally, in \S\ref{section:conclusion} we zoom out to consider what our results show about the relationship between probabilistic and causal reasoning, as well as consider a number of outstanding problems in this domain. In our presentation we assume no prior knowledge of causal modeling, complexity theory, or  probability logic. Only elementary logic and probability are presupposed. 

\section{Introducing Causal and Probabilistic Languages} \label{section:into}

In this section, we introduce the syntax and semantics for a series of  probabilistic and causal languages. With a precise syntax and semantics in hand, we illustrate that these languages form an expressive hierarchy.

\subsection{Syntax}\label{syntax}

Let $\mathbf{V}$ be a (possibly infinite) collection, representing the (endogenous) random variables under consideration. Informally, these are the variables that we may want to observe, change, query, or otherwise reason about explicitly. 

For each variable $V \in \mathbf{V}$, let $\text{Val}(V)$ denote the finite signature (range) of $V$. For example, for two binary variables we have $\mathbf{V} = \{X, Y\}$ with $\text{Val}(X) = \text{Val}(Y) = \{0, 1\}$. We introduce the following deterministic languages
\begin{align*}
    \mathcal{L}_{\text{int}}   &:= \top \;|\; V = v \;|\; \mathcal{L}_{\text{int}} \land \mathcal{L}_{\text{int}} & V \in \mathbf{V}, v \in \text{Val}(V)\\
    \mathcal{L}_{\mathrm{prop}}  &:= V = v \;|\; \neg \mathcal{L}_{\mathrm{prop}} \;|\; \mathcal{L}_{\mathrm{prop}} \land \mathcal{L}_{\mathrm{prop}} & V \in \mathbf{V}, v \in \text{Val}(V)\\
    \mathcal{L}_{\text{full}}  &:= [\mathcal{L}_{\text{int}}] \mathcal{L}_{\mathrm{prop}} \;|\; \neg \mathcal{L}_{\text{full}} \;|\; \mathcal{L}_{\text{full}} \land \mathcal{L}_{\text{full}}.
\end{align*}
Choose either $\mathcal{L}_{\mathrm{prop}}$ or $\mathcal{L}_{\mathrm{full}}$ as the \emph{base language} $\mathcal{L}$. The former is essentially a propositional language with extended ranges, while the latter is a causal conditional language. The semantics of these formulas will be introduced in \S\ref{section:semantics}, but intuitively we can interpret a formula of $\mathcal{L}_{\mathrm{full}}$, such as $[X=1]Y=0$, as expressing a subjunctive conditional: were $X$ to take on value $1$, then $Y$ would come to have value $0$. We understand the conditional causally, in a sense to be made precise below.

So-called \emph{terms} over the base language are the main ingredient of our probabilistic languages.
The most basic {term} is $\probsymbol(\delta)$ for $\delta \in \mathcal{L}$, representing the probability of $\delta$.
By varying the composite terms admitted, we can define polynomial, conditional, linear, and comparative languages. Where $\delta, \delta' \in \mathcal{L}$ are formulas of $\mathcal{L}$:
\begin{align*}
    T_{\mathrm{poly}}(\mathcal{L}) &\text{ is generated by the grammar }\mathbf{t} :=  \probsymbol(\delta) \; | \; \mathbf{t} + \mathbf{t}^\prime \; | \; \mathbf{t} \cdot \mathbf{t}^\prime \\
    T_{\mathrm{cond}}(\mathcal{L})&\text{ is generated by the grammar } \mathbf{t} :=   \probsymbol(\delta \; | \; \delta^\prime) \\
   T_{\mathrm{lin}}(\mathcal{L}) &\text{ is generated by the grammar } \mathbf{t} :=  \probsymbol(\delta) \; | \; \mathbf{t} + \mathbf{t}^\prime \\ T_{\mathrm{comp}}(\mathcal{L}) &\text{ is generated by the grammar }
    \mathbf{t} :=  \probsymbol(\delta)  
\end{align*}
We define for each $* \in \{\mathrm{comp}, \mathrm{lin},\mathrm{cond},\mathrm{poly}\}$ the causal and purely probabilistic languages:
\begin{align*}
    \mathcal{L}_{\mathrm{prob}}^* &:= \mathbf{t} \geq \mathbf{t} \; | \; \neg \mathcal{L}_{\mathrm{prob}}^* \; | \; \mathcal{L}_{\mathrm{prob}}^* \land \mathcal{L}_{\mathrm{prob}}^* &\mathbf{t} \in T_{*}(\mathcal{L}_{\mathrm{prop}}).\\
    \mathcal{L}_{\mathrm{causal}}^* &:= \mathbf{t} \geq \mathbf{t} \; | \; \neg \mathcal{L}_{\mathrm{causal}}^* \; | \; \mathcal{L}_{\mathrm{causal}}^* \land \mathcal{L}_{\mathrm{causal}}^* &\mathbf{t} \in T_{*}(\mathcal{L}_{\mathrm{full}}).
\end{align*}
Several of these probabilistic languages have appeared in the literature. For instance, $\mathcal{L}^{\mathrm{poly}}_{\mathrm{prob}}$ appeared already in early work by \cite{Scott1966AssigningPT}, while $\mathcal{L}^{\mathrm{lin}}_{\mathrm{prob}}$ was introduced  explicitly by \cite{fagin1990logic}. The language $\mathcal{L}^{\mathrm{poly}}_{\mathrm{causal}}$ was introduced and studied recently in \cite{ibeling2020probabilistic} (see also \citealt{bareinboim2020pearl} and \citealt{Eiter}). Many of these languages, however, have not yet received explicit treatment.

\subsection{Semantics} \label{section:semantics}

\subsubsection{Structural Causal Models}

The semantics for all of these languages will be defined relative to  \emph{structural causal models}, which can be understood as a very general framework for encoding \emph{data-generating processes}. In addition to the endogenous variables $\mathbf{V}$, structural causal models also employ \emph{exogenous variables} $\mathbf{U}$ as a source of random variation among endogenous settings. For extended introductions, see, e.g., \cite{Pearl2009,bareinboim2020pearl}.


\begin{definition}
A \textit{structural causal model} (SCM) $\mathfrak{M}$ is a tuple $\mathfrak{M} = (\mathcal{F}, \prob, \mathbf{U}, \mathbf{V})$, with: \begin{enumerate}[label=(\alph*)]
    \item $\mathbf{V}$ a set of \textit{endogenous variables}, with each $V \in \mathbf{V}$ taking on possible values $\text{Val}(V)$,
    \item $\mathbf{U}$ a set of \textit{exogenous variables}, with each $U \in \mathbf{U}$ taking on possible values $\text{Val}(U)$,
    \item $\mathcal{F} =\{f_V\}_{V \in \textbf{V}}$ a set of \emph{structural functions}, such that $f_V$ determines the value of $V$ given the values of the exogenous variables $\mathbf{U}$ and those of the other endogenous variables $V' \in \mathbf{V}$, and
    \item $\prob$ a probability measure on a $\sigma$-algebra $\sigma(\mathbf{U})$ on $\mathbf{U}$.
\end{enumerate}
Here we will assume for convenience that $\text{Val}(V)$ and $\text{Val}(U)$ are all finite. 
\end{definition}
In addition, we adopt the common assumption that our SCMs are recursive:
\begin{definition}
A SCM $\mathfrak{M}$ is \textit{recursive} if there is a well-order $\prec$ on $\mathbf{V}$ such that $\mathcal{F}$ respects $\prec$ in the following sense: for any $V \in \mathbf{V}$, whenever $\mathbf{v}_1,\mathbf{v}_2:V \mapsto \text{Val}(V)$ have the property that $\mathbf{v}_1(V') = \mathbf{v}_2(V')$ for all $V' \prec V$, we are guaranteed that $f_V(\mathbf{v}_1,\mathbf{u}) = f_V(\mathbf{v}_2,\mathbf{u})$.
\end{definition}

Intuitively, $\mathfrak{M}$ is recursive if for all $V \in \textbf{V}$, the function $f_V$ ensures that the value of $V$ is determined only by the exogenous random variables $U \in \mathbf{U}$ and endogenous random variables $V^\prime \in \textbf{V}$ for which $V^\prime \prec V$. Thus in a recursive model $\mathfrak{M}$, the probability measure $\prob$ on $\sigma(\textbf{U})$ induces a joint probability distribution $\prob(\textbf{V})$ over values of the variables $V \in \textbf{V}$.

Causal interventions represent the result of a \emph{manipulation} to the causal system, and  are defined in the standard way (e.g., \citealt{spirtes2000causation, Pearl2009}):
\begin{definition}
An \textit{intervention} is a partial function $i : V \mapsto \text{Val}(V)$. It specifies variables $\text{dom}(i) \subseteq \mathbf{V}$ to be held fixed and the values to which they are fixed. An intervention $i$ induces a mapping, also denoted $i$, of systems of equations $\mathcal{F} = \{f_V\}_{V \in \textbf{V}}$, such that $i (\mathcal{F})$ is identical to $\mathcal{F}$, but with $f_V$ replaced by the constant function $f_V(\cdot) = i(V)$ for each $V \in \text{dom}(i)$. Similarly, where $\mathfrak{M}$ is a model with equations $\mathcal{F}$, we write $i(\mathfrak{M})$ for the model which is identical to $\mathfrak{M}$ but with the equations $i(\mathcal{F})$ in place of $\mathcal{F}$.
\end{definition}

In order to guarantee that interventions lead to a well-defined semantics, we work with structural causal models which are measurable:

\begin{definition}
We say that $\mathfrak{M}$ is \textit{measurable} if under every finite intervention $i$, the joint distribution $\prob(\mathbf{V})$ associated with the model $i(\mathfrak{M})$ is well-defined.
\end{definition}

For measurable models, one can define a notion of causal influence:

\begin{definition}\label{defn:causalinfluence}
A model $\mathfrak{M}$ \textit{induces the influence relation} $V_i \rightsquigarrow V_j$ when there exist values $v, v^\prime \in \text{Val}(V_j)$ and interventions $\alpha,\alpha^\prime$ differing only in the value they impose upon $V_i$ for which\footnote{The truth definition for $\models$ is introduced formally below in \S\ref{sec:truth}.}
\[
\mathfrak{M} \models \probsymbol\big([\alpha] V_j = v \land [\alpha^\prime] V_j = v^\prime\big)  > 0.
\]
Given an enumeration of variables $V_1,...,V_n$ compatible with a well-order $\prec$, the model $\mathfrak{M}$ is \textit{compatible with} $\prec$ when it induces no instance $V_i \rightsquigarrow V_j$ with $i > j$.
\end{definition}

To illustrate the preceding definitions, we return to the front door graph shown in Example~\ref{ex:do}, and demonstrate an example of a SCM that is compatible with this graph:

\begin{example}\label{ex:scm}
Consider the SCM $\mathfrak{M} = (\mathcal{F}, \prob, \mathbf{U}, \mathbf{V})$, with the exogenous $\mathbf{U} = \{U, U_X, U_Y, U_Z\}$, each of which has probability $\nicefrac{1}{2}$ of being 1 and probability $\nicefrac{1}{2}$ of being 0, and with three endogenous variables $\mathbf{V} = \{X, Y, Z\}$. The equations $\mathcal{F} = \{f_V\}_{V \in \mathbf{V}}$ are given by
\begin{align*}
    f_X (U_X, U) &= U\land U_X\\
    f_Z(X, U_Z) & = X \land U_Z\\
    f_Y(Z, U_Y, U) &= Z \land U \land U_Y
\end{align*}
We observe that $\mathfrak{M}$ is measurable and recursive with the ordering $\prec$ given by $X \prec Z \prec Y$. Further, $X \rightsquigarrow Z$ and $Z \rightsquigarrow Y$, so that $\mathfrak{M}$ indeed realizes the front door graph and is compatible with $\prec$.
\end{example}

\subsubsection{Interpretations of Terms and Truth Definitions} \label{sec:truth}

It suffices to give the semantics for $\mathcal{L}_{\mathrm{causal}}^{\mathrm{poly}}$, since this language includes all of the other languages introduced above. A model is a recursive and measurable SCM $\mathfrak{M} =(\mathcal{F}, \prob, \mathbf{U},\mathbf{V})$. For each assignment $\mathbf{u}: U \mapsto \text{Val}(U)$ of values to exogenous variables, each $V \in \mathbf{V}$, and each $v \in \text{Val}(V)$, we define $\mathcal{F}, \textbf{u} \models V = v$ if the equations $\mathcal{F}$ together with the assignment $\textbf{u}$ assign the value $v$ to $V$. Conjunction and negation are defined in the usual way, giving semantics for $\mathcal{F}, \textbf{u} \models \beta$ for any $\beta \in \mathcal{L}_{\mathrm{prop}}$. If $\mathcal{F}, \mathbf{u} \models \beta$ holds for all $\mathbf{u}$, then we simply write $\mathcal{F} \models \beta$. When the relation $\mathcal{F},\mathbf{u} \models \beta$ does not depend on $\mathbf{u}$ at all---that is, we have $\mathcal{F}, \mathbf{u} \models \beta$ iff $\mathcal{F}, \mathbf{u'} \models \beta$ for all $\mathbf{u},\mathbf{u'}$ and all formulas $\beta$---we say that the equations $\mathcal{F}$ are deterministic. For $\beta,\beta^\prime \in \mathcal{L}_{\mathrm{prop}}$, we write $\beta \models \beta^\prime$ when $\mathcal{F} \models \beta \rightarrow \beta^\prime$ for all $\mathcal{F}$, where material implicaiton is defined in the usual way.

For each intervention $\alpha \in \mathcal{L}_{int}$ and each $\beta \in \mathcal{L}_{\mathrm{prop}}$, we define $\mathcal{F}, \mathbf{u} \models  [\alpha] \beta$ iff $i_\alpha(\mathcal{F}), \textbf{u}\models \beta$, where $i_\alpha$ is the intervention which effects the assignments described by $\alpha$. We also allow that $\alpha$ may be the trivial intervention $\top$, in which case we simply write $\beta$ instead of $[\alpha]\beta$. We define
\[
\big\llbracket \probsymbol(\epsilon)\big\rrbracket_\mathfrak{M} = \prob\big(\{\mathbf{u} : \mathcal{F},\textbf{u} \models \epsilon\}\big).
\]
For conditional probability terms we define $\big\llbracket \probsymbol(\delta | \delta^\prime)\big\rrbracket_\mathfrak{M} =1 $ when $\big\llbracket \probsymbol(\delta^\prime)\big\rrbracket_\mathfrak{M}=0$ and using the above definition and the usual ratio definition otherwise. For two terms $\textbf{t}_1,\textbf{t}_2$, we define $\mathfrak{M} \models \textbf{t}_1 \geq \textbf{t}_2$ iff $\llbracket \textbf{t}_1\rrbracket_\mathfrak{M} \geq \llbracket \textbf{t}_2\rrbracket_\mathfrak{M} $. The semantics for negation and conjunction are defined in the usual way, giving a semantics for $\mathfrak{M}\models\varphi$ for any $\varphi \in \mathcal{L}_{\mathrm{causal}}^{\mathrm{poly}}$.

With this semantics, probability behaves as expected. For example, we have the following validity for any $\epsilon_1,\epsilon_2$:
\[
\mathsf{Add.} \quad\probsymbol(\epsilon_1 \land \epsilon_2) + \probsymbol(\epsilon_1 \land \neg \epsilon_2) = \probsymbol(\epsilon_1).
\]

Causal interventions behave as expected as well. Indeed, fix any model $\mathfrak{M}$ with equations $\mathcal{F} $, any variable $V \in \textbf{V}$, and any assignment $\textbf{u}$ of values to the exogenous variables. Then $V$ takes on at least and at most one value upon the intervention $\alpha$: this is trivial if $\alpha$ intervenes on $V$, and it otherwise follows immediately from the fact that once $\textbf{u}$ is fixed, the values of all variables are determined by the equations $i_\alpha(\mathcal{F})$. In other words, in the language $\mathcal{L}_{\mathrm{causal}}^*$ for any $* \in \{\mathrm{comp, lin, cond, poly}\}$, we have the validity for all $\mathfrak{M}$ and $\textbf{u}$:
\[
\mathsf{Def.} \quad \bigwedge_{\substack{v,v^\prime \in \text{Val}(V)\\v \neq v^\prime}} \neg [\alpha] (V = v \land V = v^\prime) \land \bigvee_{v \in \text{Val}(V)} [\alpha] (V =v).
\] More generally, for each $\alpha \in \mathcal{L}_{\mathrm{int}}$, the indexed box $[\alpha]$ can be thought of as a \emph{normal}, \emph{functional} modal operator. 

Having introduced the syntax and semantics for several languages and pointed to some basic validities, we recall in the next subsection various results and examples that illustrate the expressive relationships between these languages.

\subsection{A Two-Dimensional Expressive Hierarchy}

\begin{definition}
For a formula $\varphi$ in any of the languages just introduced, let Mod$(\varphi) = \{\mathfrak{M}: \mathfrak{M} \models \varphi\}$ be the class of its models. For two languages $\mathcal{L}_{1}$ and $\mathcal{L}_{2}$, we say that $\mathcal{L}_{2}$ is \emph{at least as expressive as} $\mathcal{L}_{1}$ if for every $\varphi \in \mathcal{L}_1$ there is some $\psi \in \mathcal{L}_2$ such that Mod$(\varphi) = $ Mod$(\psi)$. We say $\mathcal{L}_{2}$ is \textit{strictly more expressive than }$\mathcal{L}_{1}$ if $\mathcal{L}_{2}$ is at least as expressive as $\mathcal{L}_{1}$ but not vice versa.
\end{definition}

In this section, mostly rehearsing familiar results and examples, we illustrate that the expressivity of the languages $\mathcal{L}^*$ for $\mathcal{L} \in \{\mathcal{L}_{\mathrm{prob}}, \mathcal{L}_{\mathrm{causal}}\}$ and $* \in \{\mathrm{comp, lin, cond, poly}\}$ form an expressive hierarchy along two axes. First, the purely probabilistic language $\mathcal{L}_{\mathrm{prob}}^*$ is always less expressive than the corresponding causal language $\mathcal{L}_{\mathrm{causal}}^*$. Second, $\mathcal{L}^{\mathrm{comp}}$ is less expressive than both $\mathcal{L}^{\mathrm{lin}}$ and $\mathcal{L}^{\mathrm{cond}}$, both of which are less expressive than the language $\mathcal{L}^{\mathrm{poly}}$. Where each arrow indicates a strict increase in expressivity, the hierarchy can be shown graphically:\footnote{The arrow in the center of these squares is meant to indicate that $\mathcal{L}_{\mathrm{prob}}^*$ is less expressive than $\mathcal{L}_{\mathrm{causal}}^*$ for any choice of $* \in \{\mathrm{comp, lin, cond, poly}\}$.}

\begin{center}
    \begin{tikzpicture}
    \node[vertex] (a1) at (0,0) {$\mathcal{L}^{\mathrm{comp}}_{\mathrm{prob}}$};
    \node[vertex] (a2) at (0,2) {$\mathcal{L}^{\mathrm{cond}}_{\mathrm{prob}}$};
    \node[vertex] (a3) at (2,0) {$\mathcal{L}^{\mathrm{lin}}_{\mathrm{prob}}$};
    \node[vertex] (a4) at (2,2) {$\mathcal{L}^{\mathrm{poly}}_{\mathrm{prob}}$};
    
    \node[vertex] (b1) at (4,0) {$\mathcal{L}^{\mathrm{comp}}_{\mathrm{causal}}$};
    \node[vertex] (b2) at (4,2) {$\mathcal{L}^{\mathrm{cond}}_{\mathrm{causal}}$};
    \node[vertex] (b3) at (6,0) {$\mathcal{L}^{\mathrm{lin}}_{\mathrm{causal}}$};
    \node[vertex] (b4) at (6,2) {$\mathcal{L}^{\mathrm{poly}}_{\mathrm{causal}}$};
    
    \draw[edge] (a1) -- (a2);
    \draw[edge] (a1) -- (a3);
    \draw[edge] (a2) -- (a4);
    \draw[edge] (a3) -- (a4);
    \draw[edge] (1,1) -- (5,1);
    \draw[edge] (b1) -- (b2);
    \draw[edge] (b1) -- (b3);
    \draw[edge] (b2) -- (b4);
    \draw[edge] (b3) -- (b4);
    \end{tikzpicture}
\end{center}

\subsubsection{First Axis: From Probabilistic to Causal}

To illustrate the expressivity of causal as opposed to purely probabilistic languages, we recall a variation by \cite{bareinboim2020pearl} on an example due to \cite{Pearl2009}:

\begin{example}[Causation without correlation]
Let $\mathfrak{M}_1 = (\mathcal{F}, \prob, \mathbf{U}, \mathbf{V})$, where $\textbf{U}$ contains two binary variables $U_1,U_2$ such that $\prob(U_1) = \prob(U_2) =\nicefrac{1}{2}$, and $\textbf{V}$ contains two variables $V_1, V_2$ such that $f_{V_1} = U_1$ and $f_{V_2} = U_2$. Then $V_1$ and $V_2$ are independent. Having observed this, one could not conclude that $V_1$ has no causal effect on $V_2$; indeed, consider the model $\mathfrak{M}^\prime$, which is like $\mathfrak{M}$, except with the mechanisms:
\begin{align*}
    f_{V_1} &= \mathbf{1}_{U_1 = U_2}\\
    f_{V_2} &= U_1 + \mathbf{1}_{V_1=1, U_1 = 0, U_2=1}.
\end{align*} Here $\mathbf{1}_S$ is the indicator function for statement $S$, equal to $1$ if $S$ holds and $0$ otherwise. 
In this case $\prob_{\mathfrak{M}}(V_1,V_2) = \prob_{\mathfrak{M}^\prime}(V_1,V_2)$, so that the models are indistinguishable in any of the probabilistic languages $\mathcal{L}_{\mathrm{prob}}^*$. However, the models are distinguishable in $\mathcal{L}_{\mathrm{causal}}^{\mathrm{comp}}$, and so in all of the other causal languages. Indeed, note that $\prob_{\mathfrak{M}}\big([V_1=1] V_2=1\big)= \nicefrac{1}{2}$ while $\prob_{\mathfrak{M}^\prime}\big([V_1=1] V_2=1\big)= \nicefrac{3}{4}$. Then, for instance, the following statement
\begin{align*}
    \prob\big([V_1 = 1] V_2 = 1\big) = \prob\big( [V_1 = 1] V_2 = 0\big) 
\end{align*}
belongs to $\mathcal{L}_{\mathrm{causal}}^{\mathrm{comp}}$ and distinguishes $\mathfrak{M}$ from $\mathfrak{M}^\prime$. \label{ex-cht}
\end{example}

As shown in \cite{bareinboim2020pearl} (cf. also \citealt{suppes:zan81}), the pattern in Example \ref{ex-cht} is universal: for any model $\mathfrak{M}$ it is \emph{always} possible to find some $\mathfrak{M}^\prime$ that agrees with $\mathfrak{M}$ on all of $\mathcal{L}_{\mathrm{prob}}^{\mathrm{poly}}$ but disagrees on $\mathcal{L}_{\mathrm{causal}}^{\mathrm{poly}}$.
\footnote{The Causal Hierarchy Theorem of \cite{bareinboim2020pearl} (refer to \citealt{II21} for a topological version, enabling the relevant generalization to infinite $\mathbf{V}$) involves an intermediate language between $\mathcal{L}_{\mathrm{prob}}^{\mathrm{poly}}$ and $\mathcal{L}_{\mathrm{causal}}^{\mathrm{poly}}$, capturing the type of causal information revealed by controlled experiments. Even this three-tiered hierarchy is strict, and in fact one can go further to obtain an infinite hierarchy of increasingly expressive causal languages between $\mathcal{L}_{\mathrm{prob}}^{\mathrm{poly}}$ and $\mathcal{L}_{\mathrm{causal}}^{\mathrm{poly}}$. Because we are showing that there is a complexity collapse even from the most expressive to the least expressive systems, we are not concerned in the present work with these intermediate languages.}

\begin{theorem} $\mathcal{L}_{\mathrm{causal}}^{\mathrm{poly}}$ is more expressive than $\mathcal{L}_{\mathrm{prob}}^{\mathrm{poly}}$. What is stronger, \emph{no} $\mathcal{L}_{\mathrm{prob}}^{\mathrm{poly}}$-theory (i.e., maximally consistent set in this language) uniquely determines a $\mathcal{L}_{\mathrm{causal}}^{\mathrm{poly}}$-theory. 
\end{theorem}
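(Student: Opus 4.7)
The plan is to assemble two ingredients already present in the exposition: (a) the syntactic inclusion of $\mathcal{L}_{\mathrm{prob}}^{\mathrm{poly}}$ into $\mathcal{L}_{\mathrm{causal}}^{\mathrm{poly}}$ via the trivial intervention $\top$, and (b) the universal separation phenomenon noted in the paragraph just above the theorem --- namely the Causal Hierarchy Theorem of \cite{bareinboim2020pearl} --- which guarantees that for every SCM $\mathfrak{M}$ there is a companion $\mathfrak{M}'$ agreeing with $\mathfrak{M}$ on all of $\mathcal{L}_{\mathrm{prob}}^{\mathrm{poly}}$ but disagreeing on some formula of $\mathcal{L}_{\mathrm{causal}}^{\mathrm{poly}}$, with Example \ref{ex-cht} supplying a concrete witness.

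For the first sentence of the theorem, I would begin by showing that $\mathcal{L}_{\mathrm{causal}}^{\mathrm{poly}}$ is at least as expressive as $\mathcal{L}_{\mathrm{prob}}^{\mathrm{poly}}$: every atomic term $\probsymbol(\delta)$ with $\delta \in \mathcal{L}_{\mathrm{prop}}$ is rewritten as $\probsymbol([\top]\delta)$, and this translation lifts homomorphically through the polynomial constructors and Boolean connectives, yielding for each $\varphi \in \mathcal{L}_{\mathrm{prob}}^{\mathrm{poly}}$ a formula $\psi \in \mathcal{L}_{\mathrm{causal}}^{\mathrm{poly}}$ with the same class of models (since the semantics of the trivial intervention $[\top]$ is vacuous). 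Strictness is then immediate from Example \ref{ex-cht}: the distinguishing formula there lies in $\mathcal{L}_{\mathrm{causal}}^{\mathrm{comp}} \subseteq \mathcal{L}_{\mathrm{causal}}^{\mathrm{poly}}$ and separates $\mathfrak{M}$ from $\mathfrak{M}'$, yet no $\mathcal{L}_{\mathrm{prob}}^{\mathrm{poly}}$-formula can, since those two SCMs induce identical joint distributions over $\mathbf{V}$.

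For the stronger claim I would proceed as follows. Let $T$ be a maximally consistent $\mathcal{L}_{\mathrm{prob}}^{\mathrm{poly}}$-theory. Because $T$ is satisfiable and complete, it coincides with $\{\varphi \in \mathcal{L}_{\mathrm{prob}}^{\mathrm{poly}} : \mathfrak{M} \models \varphi\}$ for some recursive, measurable SCM $\mathfrak{M}$. Applying the universality clause, pick $\mathfrak{M}'$ matching $\mathfrak{M}$ on all of $\mathcal{L}_{\mathrm{prob}}^{\mathrm{poly}}$ but disagreeing on some $\psi \in \mathcal{L}_{\mathrm{causal}}^{\mathrm{poly}}$. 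Then $\mathfrak{M}'$ also realizes $T$, while the $\mathcal{L}_{\mathrm{causal}}^{\mathrm{poly}}$-theories of $\mathfrak{M}$ and $\mathfrak{M}'$ assign $\psi$ opposite truth values. Hence $T$ does not determine a unique $\mathcal{L}_{\mathrm{causal}}^{\mathrm{poly}}$-theory.

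The one delicate point is the appeal to universality: the original statement of the Causal Hierarchy Theorem is phrased for finite $\mathbf{V}$, whereas the present paper allows $\mathbf{V}$ to be (countably) infinite, so the topological formulation of \cite{II21} flagged in the footnote just above the theorem is what one should cite. Beyond verifying that this version still supplies a partner $\mathfrak{M}'$ for every $\mathfrak{M}$ admissible here, the rest is bookkeeping --- the expressive inclusion and the passage from a maximally consistent set to a witnessing model both go through at the semantic level.
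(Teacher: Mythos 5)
Your proposal is correct and follows essentially the same route as the paper, which likewise obtains strictness from Example \ref{ex-cht} and the stronger claim by invoking the universality of the Causal Hierarchy Theorem of \cite{bareinboim2020pearl} (with the topological version of \cite{II21} for infinite $\mathbf{V}$). The paper leaves the translation via the trivial intervention and the passage from a maximally consistent set to a witnessing model implicit, so your write-up is simply a more explicit rendering of the same argument.
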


\subsubsection{Second Axis: From Qualitative to Quantitative}

Focusing just on probabilistic languages, we will show that $\mathcal{L}_{\mathrm{prob}}^{\mathrm{comp}}$ is less expressive than both $\mathcal{L}_{\mathrm{prob}}^{\mathrm{lin}}$ and $\mathcal{L}_{\mathrm{prob}}^{\mathrm{cond}}$, and that both of these are less expressive than the language $\mathcal{L}_{\mathrm{prob}}^{\mathrm{poly}}$. In each case, it suffices to give two measures $\prob_1(\textbf{V})$ and $\prob_2(\textbf{V})$ which are indistinguishable in the less expressive language but which can be distinguished by some statement in the more expressive one.

\paragraph{Comparative probability.} First, we claim that $\mathcal{L}_{\mathrm{prob}}^{\mathrm{comp}}$ is less expressive than $\mathcal{L}_{\mathrm{prob}}^{\mathrm{lin}}$. Suppose we have just a single binary variable $X$, abbreviating $X=1$ by $q$ and $X=0$ by $\neg q$. Then let $\prob_1(q) = \nicefrac{2}{3}$ so that $\prob_1(\neg q) = \nicefrac{1}{3}$, and let $\prob_2(q) = \nicefrac{3}{5}$ so that $\prob_2(\neg q)= \nicefrac{2}{5}$. The qualitative order on the four events $q, \neg q, \top, \bot$ is the same, but, for instance, $\prob_1(q) = \prob_1(\neg q)+\prob_1(\neg q)$, while $\prob_2(q) \neq \prob_2(\neg q)+\prob_2(\neg q)$.

Next, we recall an example due to \cite{luce1968numerical}, which shows that $\mathcal{L}_{\mathrm{prob}}^{\mathrm{comp}}$ is less expressive than $\mathcal{L}_{\mathrm{prob}}^{\mathrm{cond}}$. Let $p,q,r$ each be events corresponding to the three possible values taken by a random variable. Consider the measures $\prob_1(p) = \nicefrac{5}{9}, \prob_1(q) = \nicefrac{3}{9}, \prob_1(r) = \nicefrac{1}{9}$ and $\prob_2(p) = \nicefrac{6}{9}, \prob_2(q) = \nicefrac{2}{9}, \prob_2(r) = \nicefrac{1}{9}$. Then the two orders are the same, because for $i \in [2]$ $$\prob_i(\top) > \prob_i(p\vee q) > \prob_i(p \vee r) > \prob_i(p) > \prob_i(q \vee r) > \prob_i(q) > \prob_i(r) > \prob_i(\bot).$$ However, the conditional probabilities differ: $\prob_1(r | q \lor r) < \prob_1(q | p \lor q)$, while  $\prob_2(r | q \lor r) > \prob_2(q | p \lor q)$. In other words, the measures $\prob_1$ and $\prob_2$ are indistinguishable in $\mathcal{L}_{\mathrm{prob}}^{\mathrm{comp}}$ but distinguishable in $\mathcal{L}_{\mathrm{prob}}^{\mathrm{cond}}$. 

\paragraph{Polynomials in probabilities.} To show that $\mathcal{L}_{\text{prob}}^{\text{lin}}$ is less expressive than $\mathcal{L}_{\text{prob}}^{\text{poly}}$, we simply identify a formula $\varphi \in \mathcal{L}_{\text{prob}}^{\text{poly}}$ such that there is no $\psi \in \mathcal{L}_{\text{prob}}^{\text{lin}}$ with Mod$(\varphi) = $ Mod$(\psi)$. For this we can take the example $\prob(A \wedge B)=\prob(\neg A \vee \neg B) \land \prob(A|B) = \prob(B)$. (This is in fact expressible already in $\mathcal{L}_{\text{prob}}^{\text{cond}}$.) This enforces that $\prob(B)= 1/\sqrt{2}$, while \cite{IIMM22} show that every formula in $\mathcal{L}_{\text{prob}}^{\text{lin}}$ has models in which every probability is rational.

Finally, we give an example to show that $\mathcal{L}^{\mathrm{cond}}_{\mathrm{prob}}$ is less expressive than $\mathcal{L}^\mathrm{poly}_{\mathrm{prob}}$. As above, let $p,q,r$ be events corresponding to possible values taken by a random variable. Define $\prob_1(p)= \nicefrac{3}{20}, \prob_1(q)= \nicefrac{4}{20}, \prob_1(r)=\nicefrac{13}{20}$, while $\prob_2(p)=\nicefrac{3}{20} - .03,\prob_2(q)=\nicefrac{4}{20} - .01,\prob_2(r)=\nicefrac{13}{20} + .04$. One can verify by exhaustion that all comparisons of conditional probabilities agree between $\prob_1$ and $\prob_2$, thus they are indistinguishable in $\mathcal{L}_{\text{prob}}^{\mathrm{cond}}$. At the same time, there are statements in $\mathcal{L}_{\text{prob}}^{\mathrm{poly}}$ in which the models differ. For example, $\prob_1(r)\prob_1(q) <  \prob_1(p)$, whereas $\prob_2(r)\prob_2(q)  > \prob_2(p)$. This shows that $\mathcal{L}_{\mathrm{prob}}^{\mathrm{cond}}$ is less expressive than $\mathcal{L}_{\mathrm{prob}}^{\mathrm{poly}}$. Further, we observe that $\prob_1, \prob_2$ can be distinguished in $\mathcal{L}_{\mathrm{prob}}^{\mathrm{lin}}$: $\prob_i(q) \geq 0.2$ for $i = 1$ but not for $i=2$, and this statement is equivalent to the statement in $\mathcal{L}_{\mathrm{prob}}^{\mathrm{lin}}$ that
\[
\underbrace{\prob_i(q) + .. + \prob_i(q)}_{10 \text{ times}}  \geq \prob_i(\top) + \prob_i(\top).
\]
Together, this observation and the earlier remark that $\prob(A \wedge B)=\prob(\neg A \vee \neg B) \land \prob(A|B) = \prob(B)$ is expressible in $\mathcal{L}_{\text{prob}}^{\text{cond}}$ show that $\mathcal{L}^{\mathrm{lin}}$ and $\mathcal{L}^{\mathrm{cond}}$ are incomparable in expressivity.

Summarizing the results of this section: 
\begin{theorem} $\mathcal{L}^{\mathrm{lin}}$ and $\mathcal{L}^{\mathrm{cond}}$ are  incomparable in expressive power. Both are strictly more expressive than $\mathcal{L}^{\mathrm{comp}}$ and strictly less expressive than $\mathcal{L}^{\mathrm{poly}}$.
\end{theorem}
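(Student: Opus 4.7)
My plan is to combine the concrete pairs of measures already exhibited in the preceding subsection with a handful of general containment observations. Modulo trivial syntactic embeddings, each strictness claim is already witnessed by one of the examples just given; the only inclusion requiring real work is the semantic containment of $\mathcal{L}^{\mathrm{cond}}$ in $\mathcal{L}^{\mathrm{poly}}$.

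I would first record the easy containments. Since a single probability term $\probsymbol(\delta)$ is a one-summand linear combination and a degree-one polynomial, the language $\mathcal{L}^{\mathrm{comp}}$ is syntactically a sublanguage of both $\mathcal{L}^{\mathrm{lin}}$ and $\mathcal{L}^{\mathrm{poly}}$, and $\mathcal{L}^{\mathrm{lin}}$ a sublanguage of $\mathcal{L}^{\mathrm{poly}}$; in each case $\mathrm{Mod}$ is manifestly preserved. For $\mathcal{L}^{\mathrm{comp}} \leq \mathcal{L}^{\mathrm{cond}}$ I would rewrite each atomic $\probsymbol(\delta)$ as $\probsymbol(\delta \mid \tau)$ for any propositional tautology $\tau$, e.g.\ $\neg(V = v \wedge \neg(V = v))$; since $\probsymbol(\tau) = 1$ in every model, the semantics coincide. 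The nontrivial inclusion $\mathcal{L}^{\mathrm{cond}} \leq \mathcal{L}^{\mathrm{poly}}$ is handled by clearing denominators: each atomic $\probsymbol(\delta_1 \mid \delta_1') \geq \probsymbol(\delta_2 \mid \delta_2')$ translates into a disjunction indexed by whether each $\probsymbol(\delta_i')$ is zero or positive---in the strict-positivity case one cross-multiplies to obtain the polynomial inequality $\probsymbol(\delta_1 \wedge \delta_1') \cdot \probsymbol(\delta_2') \geq \probsymbol(\delta_2 \wedge \delta_2') \cdot \probsymbol(\delta_1')$, while the zero-probability disjuncts are handled via the semantic convention $\probsymbol(\delta_i \mid \delta_i') = 1$.

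For the strictness claims, the examples just exhibited do the work verbatim: the measures with $\prob_1(q) = 2/3$ vs.\ $\prob_2(q) = 3/5$ witness $\mathcal{L}^{\mathrm{comp}} \subsetneq \mathcal{L}^{\mathrm{lin}}$; Luce's three-event example witnesses $\mathcal{L}^{\mathrm{comp}} \subsetneq \mathcal{L}^{\mathrm{cond}}$; the formula $\probsymbol(A \wedge B) = \probsymbol(\neg A \vee \neg B) \wedge \probsymbol(A \mid B) = \probsymbol(B)$, which forces $\probsymbol(B) = 1/\sqrt{2}$ and is therefore inexpressible in $\mathcal{L}^{\mathrm{lin}}$ by the rational-model property of \cite{IIMM22}, witnesses $\mathcal{L}^{\mathrm{lin}} \subsetneq \mathcal{L}^{\mathrm{poly}}$; and the pair $\prob_1, \prob_2$ with close rational values on $\{p,q,r\}$ witnesses $\mathcal{L}^{\mathrm{cond}} \subsetneq \mathcal{L}^{\mathrm{poly}}$, since all conditional comparisons agree while $\probsymbol(r)\probsymbol(q)$ vs.\ $\probsymbol(p)$ distinguishes them.

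Incomparability of $\mathcal{L}^{\mathrm{lin}}$ and $\mathcal{L}^{\mathrm{cond}}$ then falls out by letting these examples do double duty. The close-rational-probabilities pair distinguishes $\prob_1, \prob_2$ in $\mathcal{L}^{\mathrm{lin}}$ (via the linear inequality $\probsymbol(q) + \cdots + \probsymbol(q) \geq \probsymbol(\top) + \probsymbol(\top)$ with ten summands on the left) while leaving them indistinguishable in $\mathcal{L}^{\mathrm{cond}}$, so $\mathcal{L}^{\mathrm{lin}} \not\leq \mathcal{L}^{\mathrm{cond}}$; and the irrationality-forcing formula above lives in $\mathcal{L}^{\mathrm{cond}}$ but admits no $\mathcal{L}^{\mathrm{lin}}$-equivalent, giving $\mathcal{L}^{\mathrm{cond}} \not\leq \mathcal{L}^{\mathrm{lin}}$. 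The main obstacle in this chain is the translation $\mathcal{L}^{\mathrm{cond}} \to \mathcal{L}^{\mathrm{poly}}$, and specifically verifying that the case-disjunction over zero-probability conditioning events yields a formula semantically equivalent to the original comparison---routine but worth writing out carefully, since it is the one place where something beyond pointing to a pair of witness measures is needed.
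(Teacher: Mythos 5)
Your proposal is correct and takes essentially the same route as the paper: the same four witness examples (the $2/3$ vs.\ $3/5$ pair, Luce's three-event example, the $\probsymbol(B)=1/\sqrt{2}$-forcing formula, and the close-rational-probabilities pair) do all the work for the strictness and incomparability claims, exactly as in the text. The one place you go beyond the paper is in spelling out the containment $\mathcal{L}^{\mathrm{cond}} \leq \mathcal{L}^{\mathrm{poly}}$ via cross-multiplication with a case split on zero-probability conditioning events; the paper leaves this implicit, and your treatment of the $\probsymbol(\delta\mid\delta')=1$ convention is the correct way to make it rigorous.
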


\section{Introducing Computational Complexity} \label{section:complexity}

In this section, we introduce the ideas from complexity theory needed to state our main results. We denote by $\SAT_{\mathrm{prob}}^*, \SAT_{\mathrm{causal}}^*$ the satisfiability problems for $\mathcal{L}_{\mathrm{prob}}^*, \mathcal{L}_{\mathrm{causal}}^*$, respectively, where $* \in \{\mathrm{comp, lin, cond, poly}\}$. There are two key definitions:

\begin{definition}
Say that a map $\varphi \mapsto \psi$ \textit{preserves and reflects satisfiability} when $\varphi$ is satisfiable if and only if $\psi$ is satisfiable. Such a map is called a \textit{many-one reduction} of $\varphi$ to $\psi$. Such a map is said to \textit{run in polynomial time} if it is computable by a Turing machine in a number of time steps that is a polynomial function of the length $|\varphi|$ of the input formula. When the Turing machine is non-deterministic, the map is said to be non-deterministic as well; in this case we say that the reduction is an \textit{$\mathsf{NP}$-reduction.}
\end{definition}

\begin{definition}
A \textit{decision problem} maps an input, represented as a binary string, to an output ``yes'' or ``no.'' For example, $\SAT_{\mathrm{prob}}^*$ maps a standard encoding of the formula $\varphi \in \mathcal{L}_{\mathrm{prob}}^*$ to ``yes'' if it is  satsifiable and to ``no'' otherwise. When each member of a collection $\mathcal{C}$ of decision problems can be reduced via some deterministic, polynomial-time map to a particular decision problem $ c\in \mathcal{C}$, one says that the problem $c$ is \textit{$\mathcal{C}$-complete.} The class $\mathcal{C}$ of decision problems is called a \textit{complexity class}.
\end{definition}

 When a problem $c$ is complete for some complexity class, this means that the complexity class $\mathcal{C}$ fully characterizes the difficulty of the problem: the problem $c$ is at least as ``hard'' as any of the problems in $\mathcal{C}$, and it is itself in $\mathcal{C}$. Thus any two problems which are complete for a complexity class are equally hard, since each can be reduced in deterministic polynomial time to the other. Complete problems facilitate results relating complexity classes: to show that a class $\mathcal{C}$ is contained in another $\mathcal{C}^\prime$, it suffices to give deterministic, polynomial-time, many-one reduction from a problem $c$ which is complete for $\mathcal{C}$ to any problem $c^\prime \in \mathcal{C}^\prime$.

\cite{fagin1990logic} showed that $\SAT_{\mathrm{prob}}^{\mathrm{lin}}$ is complete for the complexity class $\NP$. That $\SAT_{\mathrm{prob}}^{\mathrm{comp}}$ is also $\NP$-complete follows quickly from this result and the Cook-Levin theorem \citep{cook1971complexity}, which says that Boolean satisfiability is $\NP$-complete as well. For clarity, we include these known results in the statement of our main result, which gives completeness results for all of the other probabilistic and causal languages defined above:

\begin{reptheorem}{characterization}
We characterize two sets of tasks:
\begin{enumerate}
    \item 
    $\SAT_{\mathrm{prob}}^{\mathrm{comp}}, \SAT_{\mathrm{prob}}^{\mathrm{lin}}, \SAT_{\mathrm{causal}}^{\mathrm{comp}}, \SAT_{\mathrm{causal}}^{\mathrm{lin}}$ are $\NP$-complete.
    \item
    $\SAT_{\mathrm{prob}}^{\mathrm{cond}}, \SAT_{\mathrm{prob}}^{\mathrm{poly}}, \SAT_{\mathrm{causal}}^{\mathrm{cond}}, \SAT_{\mathrm{causal}}^{\mathrm{poly}}$ are $\ETR$-complete.
\end{enumerate}
\end{reptheorem}

Since problems that are complete for a class are all equally hard, our main results imply that causal and probabilistic reasoning in these languages do not differ in complexity. In the remainder of this section, we introduce the complexity classes $\NP$ and $\ETR$. We note that the inclusions $\NP \subseteq \ETR \subseteq \mathsf{PSPACE}$ are known \citep{canny1988some}, where $\mathsf{PSPACE}$ is the set of problems solvable using polynomial space; it is an open problem whether either inclusion is strict. Further, $\NP$ and $\PSPACE$ are closed under many-one $\NP$-reductions, and \cite{ten2013data} show that $\ETR$ is also closed under many-one $\NP$-reductions:

\begin{definition}
A complexity class $\mathcal{C}$ is \textit{closed under many-one $\NP$ reductions} if to show that a problem is in $\mathcal{C}$, it suffices to find a polynomial-time $\NP$-reduction of the problem to one that is known to be in $\mathcal{C}$.
\end{definition}

\subsection{The Class $\NP$}

The class $\NP$ contains any problem that can be solved by a non-deterministic Turing machine in a number of steps that grows polynomially in the input size. Equivalently, it contains any problem solvable by a polynomial-time \textit{deterministic} Turing machine, when the machine is provided with a polynomial-size \textit{certificate}, which we think of as providing the solution to the problem, or ``lucky guesses.'' In this case we think of the deterministic Turing machine as a \textit{verifier,} tasked with ensuring that the certificate communicates a valid solution to the problem.

 Hundreds of problems are known to be $\NP$-complete. Among them are Boolean satisfiability and the decision problems associated with several natural graph properties, for example possession of a clique of a given size or possession of a Hamiltonian path. See \cite{ruiz2011survey} for a survey of such problems and their relations.

\subsection{The Class $\ETR$}

The Existential Theory of the Reals (ETR) contains all true sentences of the form
\[
\text{there exist } x_1,...,x_n \in \mathbb{R} \text{ satisfying }\mathcal{S},
\]
where $\mathcal{S}$ is a system of equalities and inequalities of arbitrary polynomials in the variables $x_1,...,x_n$. For example, one can state in ETR the existence of the golden ratio, which is the only root of the polynomial $f(x) =x^2 - x -1$ greater than one, by ``there exists $x >1$ satisfying $f(x) = 0$.'' The decision problem of saying whether a given formula $\varphi\in$ ETR is complete (by definition) for the complexity class $\ETR$.

The class $\ETR$ is the real analogue of $\NP$, in two senses. Firstly, the satisfiability problem that is complete for $\ETR$ features real-valued variables, while the satisfiability problems that are complete for $\NP$ typically feature integer- or Boolean-valued variables. Secondly, and more strikingly, \cite{erickson2020smoothing} recently showed that while $\NP$ is the class of decision problems with answers that can be verified in polynomial time by machines with access to unlimited integer-valued memory, $\ETR$ is the class of decision problems with answers that can be verified in polynomial time by machines with access to unlimited \textit{real-valued} memory.

As with $\NP$, a myriad of problems are known to be $\ETR$-complete. We include some examples that illustrate the diversity of such problems:
\begin{itemize}
    \item
    In graph theory, there is the $\ETR$-complete problem of deciding whether a given graph can be realized by a straight line drawing \citep{schaefer2013realizability}. 
    \item
    In game theory, there is the $\ETR$-complete problem of deciding whether an (at least) three-player game has a Nash equilibrium with no probability exceeding a fixed threshold \citep{bilo2017existential}.
    \item 
    In geometry, there is the $\ETR$-complete ``art gallery'' problem of finding the smallest number of points from which all points of a given polygon are visible \citep{abrahamsen2018art}.
     \item
     In machine learning, there is the $\ETR$-complete problem of finding weights for a neural network trained on a given set of data such that the total error of the network falls below a given threshold \citep{abrahamsen2021training}.
\end{itemize}

For discussions of further $\ETR$-complete problems, see \cite{schaefer2009complexity} and \cite{cardinal2015computational}.


\section{Our results} \label{section:mainresult}

In this section, we prove our main result, Theorem~\ref{characterization}. To do this, we first establish that one can reduce satisfiability problems for causal languages to corresponding problems for purely probabilistic languages.

\subsection{Reduction}

\begin{definition}\label{def-reduction-languages}
Fix a set $OP$ of operations on $\mathbb{R}$, and for a given placeholder set $S$, let $OP(S)$ be the set of terms generated by application of operations in $OP$ to members of $S$. Define
\begin{align*}
    \mathcal{L}_{\mathrm{prob}} &= \mathbf{t}_1 \geq \mathbf{t}_2 \; | \; \neg \mathcal{L}_{\mathrm{prob}} \; | \; \mathcal{L}_{\mathrm{prob}}  \land \mathcal{L}_{\mathrm{prob}}  & \mathbf{t}_i \in OP\big(\{\probsymbol(\epsilon) : \epsilon \in \mathcal{L}_{\mathrm{prop}}\}\big)\\
    \mathcal{L}_{\mathrm{causal}} &= \mathbf{t}_1 \geq \mathbf{t}_2 \; | \; \neg \mathcal{L}_{\mathrm{causal}} \; | \; \mathcal{L}_{\mathrm{causal}}  \land \mathcal{L}_{\mathrm{causal}}  & \mathbf{t}_i \in OP\big(\{\probsymbol(\epsilon) : \epsilon \in \mathcal{L}_{\mathrm{full}}\}\big)
\end{align*}
The semantics for these languages are restricted to recursive SEMs.
\end{definition}

\begin{proposition}[Reduction]\label{NP-reduction}
There exists a many-one $\NP$ reduction from $\SAT_{\mathcal{L}_{\mathrm{causal}}}$ to $\SAT_{\mathcal{L}_{\mathrm{prob}}}$.
\end{proposition}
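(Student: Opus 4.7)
My plan is to reduce $\varphi \in \mathcal{L}_{\mathrm{causal}}$ to $\psi \in \mathcal{L}_{\mathrm{prob}}$ by replacing every causal sub-term $\probsymbol([\alpha]\beta)$ with a pure-probability term over a fresh family of propositional variables that encode the counterfactual worlds. The key observation is that in a recursive SCM compatible with some well-order $\prec$ on $\mathbf{V}$, the value of any variable $V$ in the world obtained under intervention $\alpha$ depends only on whether $\alpha$ fixes $V$ directly and, if not, on the values that the earlier variables $V' \prec V$ take in that same intervened world. Hence the joint law of the family $\{V^\alpha\}$, indexed by variables and interventions appearing in $\varphi$, is characterised by a small collection of probability-one axioms in the spirit of the standard axiomatisations of recursive counterfactual reasoning.

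Concretely, given input $\varphi$, let $\mathbf{V}$ be the (polynomially many) variables occurring in $\varphi$ and let $\mathsf{A}$ be the (polynomially many) interventions occurring in $\varphi$. I first non-deterministically guess a linear order $\prec$ on $\mathbf{V}$. For each pair $(V,\alpha) \in \mathbf{V} \times \mathsf{A}$ I introduce a fresh variable $V^\alpha$ with $\mathrm{Val}(V^\alpha) := \mathrm{Val}(V)$. The formula $\psi$ is then obtained from $\varphi$ by the uniform substitution $\probsymbol([\alpha]\beta) \mapsto \probsymbol(\beta^\alpha)$, where $\beta^\alpha$ rewrites each atom $V = v$ in $\beta$ as $V^\alpha = v$. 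To $\psi$ I conjoin two families of probability-one axioms: an \emph{intervention} axiom $\probsymbol(V^\alpha = v) = 1$ whenever $\alpha$ contains the conjunct $V=v$, and a \emph{functionality} axiom $\probsymbol\bigl( \bigwedge_{V' \prec V} V'^\alpha = V'^{\alpha'} \rightarrow V^\alpha = V^{\alpha'} \bigr) = 1$ for every pair $\alpha, \alpha'$ and every $V$ not fixed by either. Each axiom lies in $OP\big(\{\probsymbol(\epsilon) : \epsilon \in \mathcal{L}_{\mathrm{prop}}\}\big)$, so $\psi \in \mathcal{L}_{\mathrm{prob}}$; the whole construction runs in polynomial time on a non-deterministic machine.

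For correctness, the forward direction is mostly bookkeeping. If $\mathfrak{M} \models \varphi$, choose a $\prec$ witnessing the recursiveness of $\mathfrak{M}$ and define the joint law of the fresh family $\{V^\alpha\}$ by coupling $V^\alpha$ to the value of $V$ in $i_\alpha(\mathfrak{M})$ under the same exogenous draw. The intervention axiom is immediate, and functionality is the statement that, once the values of $V' \prec V$ coincide across $i_\alpha(\mathfrak{M})$ and $i_{\alpha'}(\mathfrak{M})$ under a given $\mathbf{u}$, the shared structural function $f_V$ returns a single value. The backward direction, which is the main obstacle, asks us to build an SCM from any measure $\prob$ satisfying $\psi$. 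I will use a canonical construction in which $\mathbf{U}$ ranges over joint realisations of the $V^\alpha$ (carrying the measure $\prob$) and each $f_V$ is defined by a $\prec$-induction: at a sample $u$ and a predecessor assignment $\mathbf{v}'$, set $f_V(\mathbf{v}', u)$ to the value $V^\alpha(u)$ for any $\alpha \in \mathsf{A}$ that leaves $V$ free and whose earlier counterfactual variables $V'^\alpha(u)$ match $\mathbf{v}'$; the functionality axiom is exactly what guarantees that this choice is independent of $\alpha$, with arbitrary defaults in the remaining input-combinations. One then verifies by $\prec$-induction that the counterfactual marginals of this SCM coincide with the $\prob$-marginals of the $V^\alpha$, so $\varphi$ holds in the SCM iff $\psi$ holds under $\prob$. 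A minor bookkeeping task is to check that the substitution $\beta \mapsto \beta^\alpha$ commutes with the term-building operations in $OP$, so that the ambient syntactic class (comparative, linear, conditional, or polynomial) is preserved by the reduction.
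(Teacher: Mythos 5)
Your proof is correct, but it takes a genuinely different route from the paper's. The paper works with (restricted) \emph{complete state descriptions} $\delta \in \Delta_\varphi$ --- the mutually exclusive atoms specifying the outcome of every mentioned intervention on every mentioned variable --- and its $\NP$ certificate is an order $\prec$ \emph{together with} a small set $\Delta'$ of such atoms; keeping $\Delta'$ polynomial requires a small-model lemma (via the linear-algebra fact of Fagin--Halpern--Megiddo), and the target formula replaces each $\probsymbol(\epsilon)$ by a disjunction over the guessed atoms. You instead introduce potential-outcome variables $V^\alpha$ and axiomatize recursiveness directly by probability-one intervention and functionality constraints, so your certificate is only the order $\prec$ and no small-model argument is needed for this proposition. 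Your consistency conditions are exactly the ones the paper's compatibility-checking algorithm tests per state description, and your backward-direction canonical SCM (with $\mathbf{U}$ ranging over joint realisations of the $V^\alpha$ and $f_V$ defined by $\prec$-induction) is the sample-wise version of the paper's Lemmas on compatibility and on realizing a measure by a recursive model; the $\prec$-induction you sketch does go through. Two points you should make explicit to keep the reduction polynomial and total: (i) set $\mathrm{Val}(V^\alpha)$ to the values of $V$ occurring in $\varphi$ plus one default value (as the paper does with its $\mathrm{Assignments}_\varphi$), since otherwise the equalities $V'^\alpha = V'^{\alpha'}$ in your functionality axioms expand to disjunctions over a possibly huge signature; and (ii) the base language $\mathcal{L}_{\mathrm{full}}$ allows Boolean combinations of conditionals inside a single probability term, so the substitution must be applied atom-wise, e.g.\ $\probsymbol\bigl([\alpha]\beta \wedge \neg[\alpha']\beta'\bigr) \mapsto \probsymbol\bigl(\beta^\alpha \wedge \neg\beta'^{\alpha'}\bigr)$, with the same coupling argument. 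Both are routine repairs; on balance your encoding is leaner for this proposition, while the paper's state-description machinery is reused later for the $\ETR$ membership argument.
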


We first give a prose overview of the main ideas underlying the reduction. Fix $\varphi \in \mathcal{L}_{\mathrm{causal}}$. The key observation is that the reduction is straightforward when every $\epsilon$ with $\probsymbol(\epsilon)$ mentioned in $\varphi \in \mathcal{L}_{\mathrm{causal}}$ is a complete state description, where a complete state description says, for each possible intervention and each variable, what value that variable takes upon that intervention. Indeed, complete state descriptions have three nice properties:
\begin{enumerate}
    \item \label{p1}
    \textit{Polynomial-time comparison to ordering.} One can easily check whether a complete state description implies influence relations conflicting with a given order $\prec$ on the variables appearing in it. Indeed, one simply reads which variables influence which variables off of the intervention statements appearing in $\epsilon$.
    \item \label{p2}
    \textit{Existence of model matching probabilities.} If a collection of complete state descriptions does not conflict with an order $\prec$, then any probability distribution on the descriptions $\epsilon$ has a recursive model that induces it; briefly, one can simply take a distribution over deterministic models for the mutually unsatisfiable descriptions $\epsilon$.
    \item \label{p3}
    \textit{Small model property.} At most $|\varphi|$ complete state descriptions are mentioned in $\varphi$, and so at most that many receive positive probability in any model satisfying $\varphi$.
\end{enumerate}

These properties will allow a reduction to go through. Indeed, fix $\varphi \in \mathcal{L}_{\mathrm{causal}}$. Given that $\varphi$ is satisfiable, one can request as an $\NP$ certificate an ordering $\prec$ and (relying on \#\ref{p3}) the small set of complete state descriptions receiving positive probability. One then checks (relying on \#\ref{p1}) that these descriptions do not conflict with $\prec$. Since $\varphi$ is satisfiable only if there exists a measure satisfying its inequalities, one can safely translate those inequalities into the probabilistic language, giving a satisfiable probabilistic formula $\psi$. If the probabilistic formula $\psi$ is satisfiable via some measure, one can (relying on \#\ref{p2}) infer a corresponding recursive model for the causal formula $\varphi$. Thus the map $\varphi \mapsto \psi$ preserves and reflects satisfiability.

As it turns out, the same reduction goes through in the general case, when the $\epsilon$ for which $\probsymbol(\epsilon)$ is mentioned in $\varphi$ need not be complete state descriptions. Roughly, the strategy is to simply replace every $\epsilon$ such that $\probsymbol(\epsilon)$ is mentioned in $\varphi \in \mathcal{L}_{\mathrm{causal}}$ with an equivalent disjunction of complete state descriptions. The primary complication with this strategy is that there are too many possible interventions, variables, and values those variables could take on; \textit{truly} complete state descriptions are exponentially long, making the reduction computationally intractable. To address this issue, we work with a \textit{restricted class} of state descriptions, which feature only the interventions, variables, and values appearing in the input formula $\varphi$:

\begin{definition}\label{def:restricted-class-state-descriptions}
Fix a formula $\varphi \in \mathcal{L}_{\mathrm{prop}} \cup \mathcal{L}_{\mathrm{causal}}$. Let $\mathcal{I}$ contain all interventions appearing in $\varphi$ and let $\mathbf{V}_\varphi$ denote all variables appearing in $\varphi$. For each variable $V \in \mathbf{V}_\varphi$, let $\text{Assignments}_\varphi(V)$ contain $V =v$ whenever $V=v$ or $V\neq v$ appears in $\varphi$, and let it also contain one assignment $V = v^*$ not satisfying either of these conditions. Let $\Delta_\varphi$ contain all possible interventions paired with all possible assignments, where the possibilities are restricted to $\varphi$:
\begin{align*}
    \Delta_\varphi = \Big\{\bigwedge_{\alpha \in \mathcal{I}} \; \Big( [\alpha] \bigwedge_{V \in \mathbf{V}_\varphi} \beta_V^\alpha\Big) \;  :\; \beta_V^\alpha \in \text{Assignments}_\varphi(V) \text{ for }V \in \textbf{V}_{\varphi}\Big\}
\end{align*}
Call $\bigwedge_{V \in \mathbf{V}_\varphi} \beta_V^\alpha$ the \textit{results} of the intervention $\alpha$, and $\beta_V^\alpha$ the \textit{result for $V$} of the intervention $\alpha$. We write $\alpha \in \delta$ when $\delta \in \Delta_{\varphi}$ as shorthand for $\alpha \in \mathcal{I}$. We write $V \in \alpha$ when $\alpha$ contains some assignment $V= v$.
\end{definition}

The following three lemmas confirm that even working with this restricted class of state descriptions, (versions of) the three nice properties outlined above are retained.

\begin{definition}
Fix a formula $\varphi \in \mathcal{L}_{\mathrm{prop}} \cup \mathcal{L}_{\mathrm{causal}}$ and $ \Delta^\prime \subseteq \Delta_\varphi$. Fix a well-order $\prec$ on $\textbf{V}_{\varphi}$. Enumerate the variables $V_1,...,V_n$ in $\textbf{V}_\varphi$ in a way consistent with $\prec$. The formula $\delta \in \Delta^\prime$ is \textit{compatible with} $\prec$ when there exists a model $\mathfrak{M}$ that assigns positive probability to $\delta$ and that is compatible with $\prec$. Define $\Delta_\prec$ to contain all $\delta \in \Delta_\varphi$ compatible with $\prec$.
\end{definition}

\begin{lemma}[Polytime Comparison to Ordering]\label{check-ordering-of-delta}
Fix $\varphi \in \mathcal{L}_{\mathrm{prop}} \cup \mathcal{L}_{\mathrm{causal}}$. Given a set $\Delta^\prime \leq |\varphi|$, one can check that $\Delta^\prime \subseteq \Delta_{\varphi}$ and that each $\delta \in \Delta^\prime$ is compatible with $\prec$ in time polynomial in $|\varphi|$.
\end{lemma}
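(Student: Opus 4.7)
The plan is to handle the two checks separately. Since $|\Delta'| \leq |\varphi|$, $|\mathcal{I}| \leq |\varphi|$, and $|\mathbf{V}_\varphi| \leq |\varphi|$, any routine per-$\delta$ test that visits each intervention-variable pair will run in time polynomial in $|\varphi|$. Checking $\Delta' \subseteq \Delta_\varphi$ is purely syntactic: after reading off $\mathcal{I}$, $\mathbf{V}_\varphi$, and $\text{Assignments}_\varphi(V)$ from $\varphi$, one parses each $\delta \in \Delta'$ and confirms it has the form $\bigwedge_{\alpha \in \mathcal{I}}[\alpha]\bigwedge_{V \in \mathbf{V}_\varphi}\beta_V^\alpha$ with each $\beta_V^\alpha \in \text{Assignments}_\varphi(V)$.

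The substantive step is checking compatibility with $\prec$. I would introduce two local conditions on a candidate $\delta \in \Delta_\varphi$: (i) for every $\alpha \in \mathcal{I}$ and every $V \in \text{dom}(\alpha)$, the result for $V$ of $\alpha$ in $\delta$ coincides with the value $\alpha$ assigns to $V$; and (ii) for every $V \in \mathbf{V}_\varphi$ and every pair $\alpha, \alpha' \in \mathcal{I}$ such that neither intervenes on $V$ and such that the results for every $V' \prec V$ of $\alpha$ and of $\alpha'$ in $\delta$ agree, the results for $V$ of $\alpha$ and of $\alpha'$ in $\delta$ also agree. Both conditions are decidable in $O(|\varphi|^3)$ time by enumeration over interventions, pairs of interventions, and variables. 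Necessity is immediate: in any recursive $\mathfrak{M}$ compatible with $\prec$ that places positive probability on $\delta$, there is some $\mathbf{u}$ for which $i_\alpha(\mathcal{F}), \mathbf{u} \models \beta_V^\alpha$ holds for every $\alpha$ and $V$; condition (i) is then forced by the definition of $i_\alpha$, and (ii) by recursivity applied to that fixed $\mathbf{u}$.

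The main obstacle is sufficiency. Assuming (i) and (ii), my plan is to construct a deterministic recursive model $\mathfrak{M}_\delta$ that assigns $\delta$ probability one, which by Definition~\ref{defn:causalinfluence} will be compatible with $\prec$. Take a single exogenous variable fixed to one value; enumerate $V_1, \ldots, V_n$ consistently with $\prec$; and define $f_{V_i}$ as a function of $V_1, \ldots, V_{i-1}$ by case analysis: if some $\alpha \in \mathcal{I}$ not intervening on $V_i$ yields exactly the input pattern as its results for $V_1, \ldots, V_{i-1}$ in $\delta$, output the result for $V_i$ of $\alpha$ in $\delta$; otherwise output an arbitrary fixed value. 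Condition (ii) guarantees well-definedness, and since each $f_{V_i}$ depends only on $V_1, \ldots, V_{i-1}$, the resulting model is recursive and compatible with $\prec$. A straightforward induction along $\prec$, using condition (i) to account for the variables that each intervention fixes directly, then shows $\mathfrak{M}_\delta \models [\alpha]\bigwedge_V \beta_V^\alpha$ for every $\alpha \in \mathcal{I}$, so $\mathfrak{M}_\delta$ indeed gives $\delta$ probability one. Combining the two checks yields the stated algorithm within the polynomial bound.
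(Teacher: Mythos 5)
Your proposal is correct and follows essentially the same route as the paper: your conditions (i) and (ii) are exactly the two rejection criteria in the paper's algorithm, your necessity argument matches the paper's contradiction via $[V_1=\beta_1,\ldots,V_{i-1}=\beta_{i-1}](V_i = v \land V_i \neq v)$, and your sufficiency argument is the same inductive construction of a deterministic model $\mathcal{F}^\delta$ with $f_{V_i}$ read off from the results of interventions not touching $V_i$. The only difference is presentational (local conditions versus an explicit loop), with no substantive divergence.
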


This lemma shows that given some statement $\varphi$ and a set of formulas $\Delta^\prime$, one can efficiently (i.e. in polynomial time) check that the formulas $\delta \in \Delta^\prime$ satisfy two conditions. The first condition is that the formulas $\delta$ describe, in the fullest terms possible, the ways that $\varphi$ could be true (i.e. $\Delta^\prime \subseteq \Delta_\varphi$). The second is that the formulas $\delta $ do not rule out the causal influence relations specified by the order $\prec$, for example the relations $X \prec Z \prec Y$ induced by the model of smoking's effect on lung cancer discussed in Example~\ref{ex:do} and Example~\ref{ex:scm}.

\begin{proof}
Checking that $\Delta^\prime \subseteq \Delta_{\varphi}$ is fast, since one can simply scan $\varphi$ to make sure that $\varphi$ mentions precisely interventions mentioned in all $\delta \in \Delta^\prime$; that $\varphi$ mentions precisely the variables $V$ appearing in the results of every intervention in $\delta \in \Delta^\prime$; and that for each such variable $V$, at most of one of its assignments $V=v$ in $\Delta^\prime$ does not appear as an assignment or a negated assignment in $\varphi$.

We now give an algorithm to check whether $\delta \in \Delta^\prime$ is compatible with $\prec$. We first give prose and formal descriptions of the algorithm and then consider its runtime and correctness.

Order the variables $V_1,...,V_n$ in $\textbf{V}_\varphi$ in a way consistent with the well-order $\prec$. For each variable $V_i$ with $i \in [n]$, do the following. First, for each intervention $\alpha$ in $\delta$ that mentions $V_i$, confirm that the intervention leads to satisfiable results: if $\delta$ says that upon the intervention $\alpha$ which sets $V_i= v$, the variable $V_i$ takes a value $v^\prime \neq v$, we reject $\delta$, which necessarily has probability 0. Next, for each pair of interventions $\alpha, \alpha^\prime$ in $\delta$ which do not intervene on the value assigned to $V_i$, check whether both interventions result in the same assignments to variables $V_j$ for all $j < i$; we say that such interventions $\alpha,\alpha^\prime$ have \textit{agreement} on all $V_j$ for $j < i$. If this is the case, and yet $\delta$ says that these two interventions result in different values for $V_i$, reject $\delta$; since $V_i$ can depend only on the values of $V_j$ for $j < i$, when these values are constant, $V_i$ must be constant as well. Here is a formal description of the algorithm. We will write $V \in \alpha$ to denote that the variable $V$ appears (or is mentioned) in the intervention $\alpha$, i.e., that $V = v$ is a conjunct in $\alpha$ for some value $v$. \vspace{.1in}

\begin{algorithm}[H]
\SetAlgoLined
Order the variables $V_1,...,V_n$ in $\textbf{V}_\varphi$ according to $\prec$\\
\For{i in 1,...,n}{
    \For{intervention $\alpha$ in $\delta$ with $V_i =v$ appearing in $\alpha$}{
            \If{$V_i = v^\prime$ with $v \neq v^\prime$ appears in the conjunction of assignments following $\alpha$}{
                \Return{$\delta$ is unsatisfiable, and so incompatible with $\prec$}
            }
        
    }
    \For{interventions $\alpha, \alpha^\prime$ in $\delta$ agreeing on all $V_j$ for $j < i$, and such that $V_i \notin \alpha$ and $V_i \notin \alpha^\prime$}{
        \If{$\alpha$ results in $V_i = v$ and $\alpha^\prime$ results in $V_i = v^\prime$ with $v \neq v^\prime$}{
            \Return{$\delta$ is incompatible with $\prec$}
        
        }
        
    }
}
\Return{$\delta$ is compatible with $\prec$}

\caption{Check that $\delta \in \Delta^\prime$ is compatible with $\prec$}
\end{algorithm} \vspace{.1in}

Below, we show that the above algorithm indeed runs in time $\mathrm{poly}(|\varphi|)$ and is correct, but for clarity, let us step through its execution on some examples. Consider the input $\delta := [V_1= 0] V_1 = 1$. Then, by the first ``if'' clause in the algorithm, $\delta$ is rejected as unsatisfiable, since the intervention $[V_1= 0]$ leads to impossible results. For another example, let $\delta^\prime$ be the formula \begin{multline*}[V_1 = 1\land V_4 = 1] (V_1 = 1\land V_2 =0\land V_3=1\land V_4=1) \\ \land [V_1 =1\land V_4=  0] (V_1 = 1\land V_2 = 0\land V_3 = 0\land V_4= 0).\end{multline*} Then in the second ``if'' clause on the third iteration, $\delta^\prime$ is rejected as incompatible with $\prec$, because the interventions $\alpha = [V_1 = 1\land V_4 = 1]$ and $\alpha^\prime = [V_1 =1\land V_4=  0]$ do not intervene on $V_3$, result in the same values for $V_1$ and $V_2$, and do result in the same value for $V_3$, contradicting the fact that $V_3$'s value must depend only on those assigned to $V_1$ and $V_2$.

It is helpful in considering these examples and the runtime of the algorithm to consider the following table of values:

\begin{center}
\begin{tabular}{ |p{4cm}||p{1.5cm}|p{1.5cm}|p{1.5cm}|  p{1.5cm}|}
 \hline
 \multicolumn{5}{|c|}{Results of all interventions in the input formula $\delta^\prime$} \\
 \hline
 Intervention & $V_1$ & $V_2$ & $V_3$ & $V_4$\\
 \hline
 $\alpha = [V_1 =1\land V_4 = 1]$ & $V_1 = 1$ & $V_2 = 0$ & $V_3 = 1$ & $V_4 = 1$\\
 $\alpha^\prime = [V_1 =1\land V_4 = 0]$ & $V_1 = 1$ & $V_2 = 0$ & $V_3 = 0$ & $V_4 = 0$\\
 \hline
\end{tabular}
\end{center}

In effect, the second ``for'' loop over all interventions $\alpha ,\alpha^\prime$ constructs the above table, starting with the leftmost column $V_1$ and proceeding to the right. The algorithm rejects $\delta^\prime$ when two cells in the column $V_i$ and rows $\alpha$ and $\alpha^\prime$ (with $V_i\notin\alpha, \alpha^\prime$) do not assign the same value to $V_i$ but agree on all columns $V_j$ to the left. The restriction that $V_i$ does not appear in $\alpha$ or $\alpha^\prime$ must be included because distinct interventions $\alpha, \alpha^\prime$ can disagree on the values they impose on $V_i$ when intervening on it, regardless of the values assigned to $V_j$ with $j < i$; such disagreement does not constitute a violation of the ordering $\prec$.

Let us first confirm that this algorithm runs in time $\mathrm{poly}(|\varphi|)$ and then show its correctness. We observe that $\max\{|\delta|,n\} = \mathrm{poly}(|\varphi|)$. The algorithm contains an $O(n)$ loop over $V_1,...,V_n$ and two $O(|\delta|^2) $ loops over interventions. The work performed inside of these loops takes time $O(n \cdot |\delta|)$, since we are simply reading $\delta$ and checking values for the variables $V_j$ for all $j < i$, which can be stored in a lookup table (like the one above) of size $O(n \cdot |\delta|)$. Thus the runtime of the algorithm is indeed $\mathrm{poly}(|\varphi|)$.

Finally, we confirm that the algorithm is correct. Fix any $\delta \in \Delta^\prime$ and recall that $\delta$ is of the form
\[
\bigwedge_{\alpha} \; \Big( [\alpha] \bigwedge_{V \in \mathbf{V}_\varphi} \beta_V^\alpha\Big),
\]
where $\beta_V^\alpha \in \text{Assignments}_\varphi(V)$. First, suppose that the above algorithm declares $\delta$ compatible with $\prec$. We will inductively construct a deterministic model of equations $\mathcal{F} = \{f_{V_i} \}_{i \in [n]}$ and show that $\mathcal{F} \models \delta$ and $\mathcal{F}$ is compatible with $\prec$. Define $f_{V_1}$ to be the constant function sending all arguments to $\beta_{V_1}$, where $\beta_{V_1}$ is the value of $V_1$ upon any intervention $\alpha \in \delta$ with $V_1 \not\in \alpha$; the second ``for'' loop in the algorithm ensures that there is at most one such value, and if there is no such value, $\beta_{V_1}$ can be chosen arbitrarily. Then $f_{V_1} \models \bigwedge_{\alpha } [\alpha] \beta_{V_1}^\alpha$. Indeed, this holds by construction for $\alpha$ with $V_1 \not \in \alpha$, and it holds trivially for $\alpha$ with $V_1 \in \alpha$, because, by the first ``for'' loop, each $\alpha$ is compatible with its results. For the inductive step, define $f_{V_i}(V_1 = \beta_{V_1},...,V_{i-1}=\beta_{V_{i-1}}) = \beta_{V_i}$, where $\beta_{V_i}$ is the value of $V_i$ upon any intervention $\alpha \in \delta$ for which $V_i \not\in \alpha$ and $\beta_{V_j}^\alpha = \beta_{V_j}$ for all $j < i$; by the same reasoning, there is at most one such value, and if there is no such value, $\beta_{V_i}$ can be chosen arbitrarily. Then by the same reasoning, $f_{V_i} \models \bigwedge_{\alpha} [\alpha] \beta_{V_i}^\alpha$. Because this holds for all $i  \in [n]$, we have $\mathcal{F} \models \delta$. By construction, $\mathcal{F}$ is compatible with $\prec$, as desired.

Now, suppose that $\delta$ is compatible with $\prec$, so that $\delta$ is not self-contradictory and there exists some $\mathcal{F}=\{f_{V_i} \}_{i \in [n]}$ compatible with $\prec$ for which $\mathcal{F}\models \delta$. We claim that the above algorithm returns that $\delta$ is indeed compatible with $\prec.$ Suppose for a contradiction that on iteration $i$, the algorithm rejects $\delta$ as incompatible with $\prec$. Since $\delta$ is not self-contradictory, it follows by the definition of the algorithm that for some interventions $\alpha, \alpha^\prime$ (with $V_i \not \in \alpha, \alpha^\prime$) which agree on all $V_j$ for $j < i$, we have $[\alpha] V_i = v$ and $[\alpha^\prime] V_i = v^\prime$ with $v \neq v^\prime$. Let $\beta_j$ be the value such that the assignments $V_j = \beta_j$ for $j < i$ result from the interventions $\alpha$ and $\alpha^\prime$. Then
\[
\mathcal{F} \models [V_1 = \beta_1,...,V_{i-1} = \beta_{i-1}] ( V_i = v \land V_i \neq v),
\]
which is impossible.
\end{proof}

\begin{lemma}[Existence of Model Matching Probabilities]\label{model-from-prob}
Fix $\varphi \in \mathcal{L}_{\mathrm{prop}} \cup \mathcal{L}_{\mathrm{causal}}$, and suppose $\prob$ is a measure on $\Delta_{\prec} \subseteq \Delta_{\varphi}$ for some $\prec$. Then there is a model $\mathfrak{M}$ inducing the measure $\prob$ on $\Delta_{\prec}$, i.e., $\big\llbracket\probsymbol(\delta)\big\rrbracket_\mathfrak{M}= \prob(\delta)$ for all $\delta \in \Delta_\prec.$
\end{lemma}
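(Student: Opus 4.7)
The plan is to build a single SCM by \emph{mixing} deterministic models, one for each state description in $\Delta_\prec$, with mixing weights given by $\prob$. Concretely, I would introduce a fresh exogenous variable $W$ whose range is $\Delta_\prec$ and whose distribution is $\prob$ itself, and then let the value of $W$ select which collection of structural equations is in force.

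First I would invoke (the constructive part of) Lemma~\ref{check-ordering-of-delta}: for every $\delta \in \Delta_\prec$, the algorithm's correctness proof already produces a deterministic family of equations $\mathcal{F}_\delta = \{f_V^\delta\}_{V \in \mathbf{V}_\varphi}$ which respects $\prec$ and satisfies $\mathcal{F}_\delta \models \delta$. I would then define $\mathfrak{M} = (\mathcal{F}, \prob', \mathbf{U} \cup \{W\}, \mathbf{V}_\varphi)$, where $\prob'$ makes $W$ distributed according to $\prob$ and independent of any remaining exogenous variables (which play no role), and where $f_V(\,\cdot\,,W=\delta) := f_V^\delta(\,\cdot\,)$. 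This $\mathfrak{M}$ is measurable because $W$ is discrete and each $\mathcal{F}_\delta$ is deterministic, and it is recursive with witness $\prec$ because each $\mathcal{F}_\delta$ respects $\prec$ and the dependence pattern on $V$'s predecessors is inherited uniformly.

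Next I would verify the probability-matching claim $\llbracket \probsymbol(\delta) \rrbracket_\mathfrak{M} = \prob(\delta)$ for every $\delta \in \Delta_\prec$. By construction, conditional on $W = \delta'$ the model behaves as the deterministic $\mathcal{F}_{\delta'}$, which satisfies $\delta'$; so conditional on $W = \delta$ we have $\mathfrak{M} \models \delta$, contributing $\prob(\delta)$ to $\llbracket \probsymbol(\delta) \rrbracket_\mathfrak{M}$. The remaining step is to show that for $\delta' \neq \delta$ in $\Delta_\prec$, $\mathcal{F}_{\delta'} \not\models \delta$, so that those branches contribute nothing. This follows from the shape of $\Delta_\varphi$: any two distinct elements of $\Delta_\varphi$ differ in some result $\beta_V^\alpha$, meaning they assign distinct values $v \neq v'$ to some $V$ under some intervention $\alpha \in \mathcal{I}$; since $\mathcal{F}_{\delta'} \models [\alpha]V = v'$ and $\text{Val}(V)$ is a set of mutually exclusive values, $\mathcal{F}_{\delta'}$ falsifies the corresponding conjunct of $\delta$. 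Summing over the values of $W$ then yields exactly $\prob(\delta)$.

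The main obstacle I expect is the bookkeeping around this mutual-exclusivity argument, together with making precise that the $W$-indexed family of deterministic equations genuinely assembles into one SCM whose distribution on $\mathbf{V}_\varphi$ is the desired mixture; both hinge on the fact that elements of $\Delta_\prec$ are \emph{maximally specific} within the restricted vocabulary of $\varphi$ (pinning down, for each $\alpha \in \mathcal{I}$ and each $V \in \mathbf{V}_\varphi$, exactly one element of $\text{Assignments}_\varphi(V)$), so that the events $\{W = \delta\}$ partition the probability mass of every $\delta \in \Delta_\prec$ cleanly.
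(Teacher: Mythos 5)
Your proposal is correct and follows essentially the same route as the paper's own proof: a single exogenous selector variable distributed according to $\prob$ chooses among deterministic equation systems $\mathcal{F}^\delta$ satisfying each $\delta$, recursivity is inherited from the per-$\delta$ systems' compatibility with $\prec$, and the probability-matching identity follows from the mutual unsatisfiability of distinct elements of $\Delta_\varphi$. The only cosmetic differences are that the paper takes the selector's range to be all of $\Delta_\varphi$ rather than $\Delta_\prec$ and verifies recursivity by explicitly tracing the induced influence relations $V_i \rightsquigarrow V_j$ back to the deterministic components.
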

\begin{proof}
Let us first define the model $\mathfrak{M}$ and then show that it is recursive. Let $\mathbf{V}_{\varphi}$ denote all variables appearing in $\varphi$. We define $\mathfrak{M} = \big(\mathcal{F}, \prob_{\mathfrak{M}}, \{U\}, \mathbf{V}_{\varphi}\big)$, where $\mbox{Val}(U) = \Delta_\varphi$ and $\prob_{\mathfrak{M}} (U = \delta) = \prob(\delta)$ for all $\delta \in \Delta_\varphi$. Enumerate the variables $\mathbf{V}_\varphi = \{V_1,...,V_n\}$ in a way consistent with $\prec$. Fix any $\delta \in \Delta_\varphi$. Recall that $\delta$ is of the form
\[
\bigwedge_{\alpha} \; \Big( [\alpha] \bigwedge_{i \in [n]} \beta_{V_i}^\alpha\Big),
\]
where $\beta_{V_i}^\alpha \in \text{Assignments}_\varphi(V_i)$. If $\delta$ is satisfiable, it has a model, i.e. a deterministic system of equations $\mathcal{F}^\delta = \{f_{V_i}^\delta\}_{i \in [n]}$ such that $\mathcal{F}^\delta \models \delta$. Turning now to define the equations $\mathcal{F} = \{f_{V_i}\}_{i \in [n]}$, for any assignment $\textbf{v}$ to the variables $V_j$ for $j < i$, put
\begin{align*}
    f_{V_i}(\mathbf{v}, U = \delta) = f_{V_i}^\delta(\mathbf{v}).
\end{align*}
By the above equations and mutual unsatisfiability of $\delta \in \Delta_{\prec}$, it follows that for all such $\delta$ $$\big\llbracket \probsymbol(\delta) \big\rrbracket_{\mathfrak{M}} = \prob_{\mathfrak{M}}[U = \delta] = \prob(\delta),$$ as required.

It remains for us to confirm that $\mathfrak{M}$ is recursive. We claim that the influence relationships $V_i \rightsquigarrow V_j$ induced by the model $\mathfrak{M}$ are simply those induced by the deterministic models\footnote{I.e., according to Definition \ref{defn:causalinfluence} when each deterministic model is thought of as a probabilistic model in which its respective system of functions is selected with no uncertainty.} $\mathcal{F}^\delta$. This would complete the proof, since by assumption we have $\delta \in \Delta_{\prec}$, so that $\mathcal{F}^\delta$ is compatible with $\prec$, and therefore $i < j$. Suppose that $\mathfrak{M}$ induces the influence relation $V_i \rightsquigarrow V_j$. Then for some interventions $\alpha, \alpha^\prime$ which disagree only on the value assigned to $V_i$, some assignment $\textbf{u}$ to $U$, and some distinct values $v, v^\prime$ of $V_j$, we have
\[
\mathcal{F} , \textbf{u} \models [\alpha] V_j = v \land [\alpha^\prime] V_j = v^\prime.
\]
Let $\delta$ be the value that $\textbf{u}$ assigns to $U$. We claim that
\[
\mathcal{F}^\delta \models [\alpha] V_j = v \land [\alpha^\prime] V_j = v^\prime.
\]
Indeed, this follows from the fact that $f_{V_i}(\textbf{v}, U= \delta) = f_{V_i}^\delta(\textbf{v})$ for all $i \in [n]$.
\end{proof}

\begin{lemma}[Small Model Property]\label{small-model}
Fix $\varphi \in \mathcal{L}_{\mathrm{prop}} \cup \mathcal{L}_{\mathrm{causal}}$. If $\varphi$ is satisfiable, then $\varphi$ has a small model, in the sense that the model assigns positive probability to at most $|\varphi|$ elements $\delta \in \Delta_{\varphi}$.
\end{lemma}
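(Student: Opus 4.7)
The plan combines a pushforward onto $\Delta_\varphi$ with a Carathéodory sparsification, finishing with an invocation of Lemma~\ref{model-from-prob}. The starting observation is that whether $\mathfrak{M}\models\varphi$ depends only on the finitely many real numbers $\llbracket\probsymbol(\epsilon)\rrbracket_\mathfrak{M}$ for the (at most $|\varphi|$) subformulas $\probsymbol(\epsilon)$ occurring in $\varphi$, so it suffices to produce a recursive measurable SCM $\mathfrak{M}'$ that agrees with $\mathfrak{M}$ on each such value and whose induced distribution on $\Delta_\varphi$ has support of size at most $|\varphi|$.

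First I would push the satisfying model $\mathfrak{M}$ forward to a distribution on $\Delta_\varphi$. Fix an order $\prec$ with which $\mathfrak{M}$ is compatible. For each exogenous assignment $\mathbf{u}$, define $\delta(\mathbf{u})\in\Delta_\varphi$ by setting, for each $\alpha\in\mathcal{I}$ and $V\in\mathbf{V}_\varphi$, $\beta_V^\alpha:=(V=v)$ when the value $v$ realized by $\mathfrak{M},\mathbf{u}$ under $\alpha$ lies in $\text{Assignments}_\varphi(V)$, and $\beta_V^\alpha:=(V=v^*)$ otherwise. Since $v^*$ is chosen so that neither $V=v^*$ nor $V\neq v^*$ appears in $\varphi$, while every value literally appearing in $\varphi$ is in $\text{Assignments}_\varphi(V)$, a case split on the atoms $[\alpha](V=v)$ of $\varphi$ shows $\delta(\mathbf{u})\models\epsilon$ iff $\mathfrak{M},\mathbf{u}\models\epsilon$ for every $\epsilon$ under $\probsymbol$ in $\varphi$. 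Pushing the exogenous measure forward along $\mathbf{u}\mapsto\delta(\mathbf{u})$ therefore yields a distribution $\prob$ on $\Delta_\varphi$ matching $\llbracket\probsymbol(\epsilon)\rrbracket_\mathfrak{M}$ for every such $\epsilon$.

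The main obstacle is verifying that every $\delta(\mathbf{u})$ actually lies in $\Delta_\prec$, so that Lemma~\ref{model-from-prob} will apply. By the characterization in the proof of Lemma~\ref{check-ordering-of-delta}, this reduces to showing, for each pair of interventions $\alpha,\alpha'$ not mentioning $V_i$ and agreeing in $\delta(\mathbf{u})$ on every $V_j$ with $j<i$, that they also agree on $V_i$. The key sub-claim, proved by induction on $j$, is that such agreement in $\delta(\mathbf{u})$ forces agreement of the underlying actual $\mathfrak{M},\mathbf{u}$-values of $V_j$: every intervention value is on-roster and distinct from $v_j^*$, so the collapse $v\mapsto v^*$ cannot spuriously equate an intervened $V_j$-value with anything else; and when neither $\alpha$ nor $\alpha'$ intervenes on $V_j$, recursivity of $\mathfrak{M}$ combined with the inductive hypothesis delivers equal actual values. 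Given agreement of actual values for all $j<i$, recursivity yields equal actual $V_i$-values and hence $\beta_{V_i}^\alpha=\beta_{V_i}^{\alpha'}$.

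Finally I would sparsify and rebuild. Let $\epsilon_1,\dots,\epsilon_k$ enumerate the distinct subformulas with $\probsymbol(\epsilon_i)$ in $\varphi$; then $k+1\leq|\varphi|$ on syntactic grounds, since writing down $k$ distinct $\probsymbol$-terms already costs at least $k+1$ characters. The linear map $T:\prob\mapsto(\prob(\epsilon_1),\dots,\prob(\epsilon_k))$ sends the simplex over $\Delta_\prec$ to a polytope in $\mathbb{R}^k$, and Carathéodory's theorem expresses $T(\prob)$ as a convex combination of at most $k+1$ vertex images $T(\mathbf{1}_\delta)$. Pulling back yields a distribution $\prob'$ on $\Delta_\prec$ supported on at most $|\varphi|$ elements with $T(\prob')=T(\prob)$. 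Applying Lemma~\ref{model-from-prob} to $\prob'$ then produces the desired recursive measurable SCM $\mathfrak{M}'$, which satisfies $\llbracket\probsymbol(\epsilon_i)\rrbracket_{\mathfrak{M}'}=\llbracket\probsymbol(\epsilon_i)\rrbracket_\mathfrak{M}$ for every $i$, hence $\mathfrak{M}'\models\varphi$, while its induced distribution on $\Delta_\varphi$ has support of size at most $|\varphi|$.
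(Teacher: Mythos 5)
Your proposal is correct and follows the same architecture as the paper's proof: push the satisfying model onto a distribution over $\Delta_\prec$, sparsify, and rebuild with Lemma~\ref{model-from-prob}. Two steps are executed differently. For the pushforward, the paper argues syntactically, rewriting each $\epsilon$ under a $\probsymbol$ in $\varphi$ as a disjunction of state descriptions (via distribution of Booleans over $[\alpha]$ and the validity $\mathsf{Def}$) and then invoking finite additivity; you instead define the map $\mathbf{u}\mapsto\delta(\mathbf{u})$ semantically and check atom-by-atom that $\delta(\mathbf{u})\models\epsilon$ iff $\mathfrak{M},\mathbf{u}\models\epsilon$. Your version has the advantage of handling off-roster values explicitly through the $v^*$-collapse, a point the paper's disjunction argument passes over. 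For sparsification, your Carath\'eodory step is exactly the linear-algebra fact the paper cites from Fagin et al.\ (their Lemma~4.8): a solvable system of $k+1$ linear equations over the simplex has a nonnegative solution with at most $k+1$ nonzero coordinates. One caveat: in your verification that $\delta(\mathbf{u})\in\Delta_\prec$, the induction ``on $j$'' over the indices of $\mathbf{V}_\varphi$ is too weak as stated, because $\mathfrak{M}$ may contain endogenous variables outside $\mathbf{V}_\varphi$ that mediate between the $V_j$; since $\alpha,\alpha'$ never intervene on such variables the conclusion still holds, but the induction must run along $\prec$ over all of $\mathbf{V}$ (showing that every $V'\prec V_i$, on-roster or not, takes the same actual value under $\alpha$ and $\alpha'$ at $\mathbf{u}$), not just over $V_1,\dots,V_{i-1}$. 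The paper sidesteps this by observing directly that $\mathfrak{M}$ itself witnesses compatibility with $\prec$ of any $\delta$ to which it assigns positive probability, an option not open to you because the $v^*$-collapse means $\mathfrak{M},\mathbf{u}$ need not satisfy $\delta(\mathbf{u})$.
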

\begin{proof}
Since $\varphi$ is satisfiable, it has a recursive model $\mathfrak{M}$ with some order $\prec$. Given the existence of $\mathfrak{M}$, we claim there exists a small model $\mathfrak{M}_{\mathrm{small}}$. Indeed, consider the system of equations in the unknowns $\{\prob(\delta) : \delta \in \Delta_\prec\}$ given by
\begin{align*}
    \sum_{\delta \in \Delta_\prec} \prob(\delta) &= 1\\
    \sum_{\substack{\delta \in \Delta_\prec\\ \delta \models \epsilon}} \prob(\delta) &= \prob_\mathfrak{M}(\epsilon),  \text{ for each }\epsilon \text{ such that }\probsymbol(\epsilon) \text{ is mentioned in }\varphi.
\end{align*}
There are at most $|\varphi|$ equations, and since for each $\epsilon$, there exists some $\delta \in \Delta_\prec$ for which $\delta \models \epsilon$, the equations are non-trivial. Suppose for the moment that $\prob = \prob_\mathfrak{M}$ is a solution. Then by a fact of linear algebra (see Lemma~4.8 of \citealt{fagin1990logic}), since the at most $|\varphi|$ linear equations have a solution, they have a solution $\prob=\prob_{\mathrm{small}}$ in which at most $|\varphi|$ of the variables $\prob_{\mathrm{small}}(\delta)$ are nonzero. By Lemma~\ref{model-from-prob}, we then infer from the existence of $\prob_{\mathrm{small}}$ that the desired model $\mathfrak{M}_{\mathrm{small}}$ exists.

It remains to confirm that $\prob = \prob_\mathfrak{M}$ is indeed a solution to the above system of equations. To show this, we must show that for $\epsilon$ with $\probsymbol(\epsilon)$ mentioned in $\varphi$,
\[
\prob(\epsilon) = \sum_{\substack{\delta \in \Delta_\prec\\ \delta\models \epsilon}} \prob(\delta).
\]
By our choice of $\prec$, we know that the recursive model $\mathfrak{M}$ will assign probability 0 to all $\delta \not \in \Delta_\prec$. It thus suffices to show that the above holds when $\Delta_\prec$ is replaced with the larger set $\Delta_\varphi$. To do this, we will put $\epsilon$ into a more manageable form; afterwards, establishing the above equality will be relatively straightforward.

If $\epsilon$ mentions only one intervention $\alpha$, we claim that $\models \epsilon \leftrightarrow [\alpha] \beta$, where $\beta$ is an assignment of variables in $\varphi$ to various values. Indeed, negation and conjunction distribute over $[\alpha]$, in the sense that $\models \neg [\alpha] \beta \leftrightarrow  [\alpha] \neg\beta$ and $\models  [\alpha] \beta \land  [\alpha] \beta^\prime \leftrightarrow  [\alpha] (\beta \land \beta^\prime)$, so $[\alpha]$ can be assumed to appear on the outside. Further, since by the validity $\mathsf{Def}$, each variable takes one and only one value upon the intervention $\alpha$, we can replace $\beta$ with a disjunction over all assignments to all variables in $\varphi$ which agree with $\beta$. Let us use $\mathbf{v}_\varphi$ to denote such an assignment $\bigwedge_{V \in \mathbf{V}_\varphi} \beta_V$ where $\beta_V  \in \mathrm{Assignments}(V)$, as defined in Definition~\ref{def:restricted-class-state-descriptions}. Summing up:
\[
\models \epsilon \leftrightarrow [\alpha] \bigvee_{\mathbf{v}_{\varphi} \models \beta} \mathbf{v}_{\varphi}.
\]
The exact same ideas apply when $\epsilon$ mentions several interventions $[\alpha_i] \beta_i$ for $i \in [n]$, in which case
\[
\models \epsilon \leftrightarrow \bigwedge_i [\alpha_i] \bigvee_{\mathbf{v}_{\varphi} \models \beta_i} \mathbf{v}_{\varphi} \leftrightarrow \bigvee_{\mathbf{v}^i_\varphi \models \beta_i } \bigwedge_i [\alpha_i] \mathbf{v}_\varphi^i.
\]
Thus, since all interventions, variables, and assignments appearing in $\epsilon$ are mentioned by the $\delta \in \Delta_\varphi$, and one can always add trivial interventions $[\alpha] \top$, we see that it is a validity that $\epsilon$ is equivalent to a disjunction of formulas $\delta \in \Delta_\varphi$. Finally, we conclude with the observation that since the $\delta \in \Delta_\varphi$ are mutually unsatisfiable, additivity for the measure $\prob$ (according to which $\prob(\delta \lor \delta^\prime) = \prob(\delta) + \prob(\delta^\prime)$ for mutually unsatisfiable $\delta, \delta^\prime$) tells us that
\[
\prob(\epsilon) = \sum_{\substack{\delta \in \Delta_\varphi\\ \delta\models \epsilon}} \prob(\delta),
\]
as desired.
\end{proof}

With the lemmas in hand, we now give the desired reduction:

\begin{proof}[Proof of Proposition~\ref{NP-reduction}]
Fix a $\SAT_{\mathcal{L}_{\mathrm{causal}}}$ instance $\varphi$. We first describe the $\NP$ certificate and many-one reduction and then prove soundness and completeness. The $\NP$ certificate consists of an order $\prec$ on $\mathbf{V}_\varphi$ and a set of $\Delta^\prime$ of size at most $|\varphi|$. The reduction proceeds as follows. 
\begin{enumerate}
    \item 
    Check that $\Delta^\prime \subseteq \Delta_{\varphi}$ and that each $\delta \in \Delta^\prime$ is compatible with $\prec$.
    
    We note that by Lemma~\ref{check-ordering-of-delta}, this can be done in time polynomial in $|\varphi|$.
    \item
    Replace $\probsymbol(\epsilon)$ appearing in $\varphi$ with $\probsymbol\Big(\bigvee_{\substack{\delta \in \Delta^\prime\\\delta \models \epsilon}} f(\delta)\Big)$, where $f$ is a bijection between $\Delta^\prime$ and an arbitrary set of mutually unsatisfiable statements in $\mathcal{L}_{\mathrm{prop}}$. Call the resulting $\mathcal{L}_{\mathrm{prob}}$ formula $\varphi(\Delta^\prime)$.
    
    We note that checking $\delta \models \epsilon$ can be done in polynomial time, since $\delta$ is a complete description of the results of all interventions.
\end{enumerate}

\textbf{Completeness:} If $\varphi$ is satisfiable, by Lemma~\ref{small-model} it has a small model that assigns positive probability only to some $\Delta^\prime \subseteq \Delta_\prec$ for some ordering $\prec$, and the probabilities given by this model also solve $\varphi(\Delta^\prime)$. So the certificate exists and the reduction succeeds in producing a satisfiable formula.

\textbf{Soundness:} If $\varphi(\Delta^\prime)$ is satisfiable, it is solved by some measure $\prob$. This is a measure defined on $f(\Delta^\prime)$, and so on $\Delta^\prime \subseteq \Delta_\prec$. Thus since each $\delta \in \Delta^\prime$ is compatible with $\prec$, by Lemma~\ref{model-from-prob} there exists a model $\mathfrak{M}$ such that $\big\llbracket\probsymbol(\delta)\big\rrbracket_\mathfrak{M} = \prob(\delta)$ for $\delta\in\Delta_\varphi$. This $\mathfrak{M}$ is a model of the inequalities stated by $\varphi$ and is recursive, so $\varphi$ is satisfiable as well.
\end{proof}

\subsection{Characterization}

Now, we show our main result:

\begin{reptheorem}{characterization}
We characterize two sets of tasks:
\begin{enumerate}
    \item 
    $\SAT_{\mathrm{prob}}^{\mathrm{comp}}, \SAT_{\mathrm{prob}}^{\mathrm{lin}}, \SAT_{\mathrm{causal}}^{\mathrm{comp}}, \SAT_{\mathrm{causal}}^{\mathrm{lin}}$ are $\NP$-complete.
    \item
    $\SAT_{\mathrm{prob}}^{\mathrm{cond}}, \SAT_{\mathrm{prob}}^{\mathrm{poly}}, \SAT_{\mathrm{causal}}^{\mathrm{cond}}, \SAT_{\mathrm{causal}}^{\mathrm{poly}}$ are $\ETR$-complete.
\end{enumerate}
We can express these results in a diagram, which holds for $* \in \{\mathrm{prob, causal}\}$:

\begin{center}
    \begin{tikzpicture}
    \node[vertex] (1) at (0,0) {$\SAT_*^{\mathrm{comp}}$};
    \node[vertex] (2) at (0,2) {$\SAT_*^{\mathrm{cond}}$};
    \node[vertex] (3) at (2,0) {$\SAT_*^{\mathrm{lin}}$};
    \node[vertex] (4) at (2,2) {$\SAT_*^{\mathrm{poly}}$};
    \node[vertex] at (1,1.5) {$\ETR$};
    \node[vertex] at (1,.5) {$\NP$};
    \draw [thick,dashed] (-1,1) -- (3,1);
    
    \draw[edge] (1) -- (2);
    \draw[edge] (1) -- (3);
    \draw[edge] (2) -- (4);
    \draw[edge] (3) -- (4);
    \end{tikzpicture}
\end{center}
The line separates $\ETR$-complete problems from $\NP$-complete problems, and an arrow from one satisfiability problem to another indicates that any instance of the former problem is an instance of the latter.
\end{reptheorem}

We note that these results imply that there exists a many-one, polynomial-time, \textit{deterministic} many-one reduction from $\SAT_{\mathrm{causal}}^{*}$ to $\SAT_{\mathrm{prob}}^{*}$, for any $* \in \{\mathrm{comp, lin, cond, poly}\}$, whereas Proposition~\ref{NP-reduction} only gives a non-deterministic reduction. To illustrate, recall the model of smoking's effect on lung cancer discussed in Example~\ref{ex:do} and Example~\ref{ex:scm}. Consider again the task of determining whether smoking makes one more likely to possess lung cancer, given one's causal assumptions $\Gamma$ and one's observation of statistical correlation between smoking, tar deposits in the lungs, and lung cancer. In other words, the task is determine whether
\begin{equation} \Gamma + \mbox{Correlational data } \models \probsymbol\big([X = 1] Y =1\big) > \probsymbol\big([X = 0] Y =1 \big)\label{equation:causal},\end{equation} where the correlational data includes statements such as $\probsymbol(Y = 1 |X = 1) > \probsymbol(Y =1 | X  = 0)$ and $ 0.7 > \probsymbol(Y = 1| X=1) > 0.6$. The above result implies that this task is no more difficult than that of determining whether an analogous entailment
\begin{equation}
    \Gamma^\prime  + \mbox{Correlational data} \models \mbox{ Probabilistic conclusion }\label{equation:probabilistic}
\end{equation}
holds, given \textit{purely probabilistic} assumptions $\Gamma^\prime$. Indeed, given Equation~(\ref{equation:causal}), one can efficiently (i.e. in polynomial time) construct a probabilistic equation with the form of Equation~(\ref{equation:probabilistic}) such that both entailments have the same truth-value; the causal inference goes through if and only if the purely probabilistic inference goes through. 

To show the results in the second part of Theorem~\ref{characterization}, we borrow the following lemma from \cite{abrahamsen2018art}:

\begin{lemma}\label{etr-inv}
Fix variables $x_1,...,x_n$, and set of equations of the form $x_i + x_j = x_k$ or $x_i x_j = 1$, for $i,j,k \in [n]$. Let $\exists\mathbb{R}$-inverse be the problem of deciding whether there exist reals $x_1,...,x_n$ satisfying the equations, subject to the restrictions $x_i \in [\nicefrac{1}{2},2]$. This problem is $\ETR$-complete.
\end{lemma}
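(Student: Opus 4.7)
The plan is to show both membership in $\ETR$ and $\ETR$-hardness. Membership is immediate: an instance of $\exists\mathbb{R}$-inverse is a conjunction of polynomial equalities ($x_i + x_j - x_k = 0$ and $x_i x_j - 1 = 0$) and the polynomial inequalities $1/2 \le x_i \le 2$, so it is a syntactic special case of $\ETR$.

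For hardness, I would reduce from general $\ETR$, in two stages. First, preprocess the input into a normal form. By introducing fresh variables to decompose arbitrary polynomials, any $\ETR$ instance can be converted to a conjunction of atoms of the three shapes $x_i + x_j = x_k$, $x_i \cdot x_j = x_k$, and $x_i = c$ for a rational constant $c$. Moreover, it is standard that $\ETR$ instances admit solutions in a bounded range (say, singly or doubly exponential in the input size), so after a global affine change of variables $x_i \mapsto a + b y_i$, we may assume each variable is confined to the interval $[1/2, 2]$. The first two atom types are preserved (up to constants) under this rescaling, provided we can still realize the constants appearing in the translation.

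The heart of the reduction is simulating each normalized atom using only additions $y_i + y_j = y_k$ and inversions $y_i y_j = 1$. A constant gadget for $1$ is given by a fresh variable $u$ with $u\cdot u = 1$: together with $u \in [1/2, 2]$ this forces $u = 1$. From this, any fixed rational constant in $[1/2,2]$ can be obtained by a short sequence of additions and inversions. Multiplication $x\cdot y = z$ is the subtle case, since the allowed inversion atom is only $y_i y_j = 1$, not a general product. The standard trick combines the polarization identity $4xy = (x+y)^2 - (x-y)^2$ with an expression for squaring via inversion, for example
\[
\frac{1}{x-c} - \frac{1}{x+c} \;=\; \frac{2c}{x^2 - c^2},
\]
which lets us recover $x^2$ using only additions, subtractions (which are additions with the appropriate sign conventions), and inversions, modulo a fixed constant $c$. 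Applying this gadget once per squared term lets us encode general multiplication via only the two allowed atom types.

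The main obstacle, and the part that requires the most care, is the range restriction: every auxiliary variable introduced by the gadgets must itself lie in $[1/2, 2]$, yet naive addition can grow past $2$ and naive inversion of small quantities can blow up. The fix is to rescale the original instance so that all values sit in a narrow subinterval around $1$ (e.g.\ $[1-\epsilon, 1+\epsilon]$ for suitable $\epsilon$), so that the intermediate sums, differences, and reciprocals produced by each gadget remain safely inside $[1/2, 2]$. Tracking this bookkeeping across the polynomially many gadget instantiations — and verifying that the resulting set of inverse/addition atoms is satisfiable iff the original $\ETR$ instance is — is the delicate part of the construction, and is carried out in full in \cite{abrahamsen2018art}, on which we rely.
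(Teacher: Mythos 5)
Your proposal is correct and follows essentially the same route as the paper, which likewise only sketches this lemma and defers to \cite{abrahamsen2018art}: both arguments reduce from a normal form of $\ETR$, bound and rescale solutions into $[\nicefrac{1}{2},2]$, and simulate multiplication with addition-plus-inversion gadgets. Your version supplies more gadget detail (the polarization identity and squaring via reciprocals) than the paper's two-step outline, but the delicate range bookkeeping is deferred to the same reference in both cases.
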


Here, for reasons of space we only outline the two steps in the proof of Lemma~\ref{etr-inv}. First, one shows that finding a real root of a degree 4 polynomial with rational coefficients is $\ETR$-complete, and then one repeatedly performs variable substitutions to get the constraints $x_i + x_j = x_k$ and $x_i x_j = 1$. Second, one shows that any such polynomial has a root within a closed ball about the origin, and then one shifts and scales this ball to contain exactly the range $[\nicefrac{1}{2},2]$.

With the lemma in hand, we show the theorem:

\begin{proof}[Proof of Theorem \ref{characterization}]
We begin with the first statement. Using the fact that $\SAT_{\mathrm{prob}}^{\mathrm{comp}}$ is $\NP$-hard, it suffices to show that $\SAT_{\mathrm{causal}}^{\mathrm{lin}}$ is inside $\NP$; indeed, since all of the satisfiability problems mentioned in the first statement include $\SAT_{\mathrm{prob}}^{\mathrm{comp}}$ and are included by $\SAT_{\mathrm{causal}}^{\mathrm{lin}}$, they would all then be $\NP$-hard and inside $\NP$, and so would all be $\NP$-complete. It is known both that $\SAT_{\mathrm{prob}}^{\mathrm{lin}}$ is inside $\NP$ and that $\NP$ is closed under many-one $\NP$ reductions; by Proposition~\ref{NP-reduction}, this places $\SAT_{\mathrm{causal}}^{\mathrm{lin}}$ inside $\NP$, as desired.

We turn now to the second statement. By the same reasoning, it suffices to show that $\SAT_{\mathrm{prob}}^{\mathrm{cond}}$ is $\ETR$-hard and that $\SAT_{\mathrm{causal}}^{\mathrm{poly}}$ is inside $\ETR$. We claim that $\SAT_{\mathrm{prob}}^{\mathrm{poly}}$ is inside $\ETR$; $\ETR$ is closed under many-one $\NP$ reductions \citep{ten2013data}, so Proposition~\ref{NP-reduction} will place $\SAT_{\mathrm{causal}}^{\mathrm{poly}}$ in $\ETR$ immediately.

To show that $\SAT_{\mathrm{prob}}^{\mathrm{poly}}$ is inside $\ETR$, we slightly extend a proof by \citep{ibeling2020probabilistic} that the problem is in $\PSPACE$. Suppose that $\varphi \in \mathcal{L}_{\text{prob}}^{\text{poly}}$ is satisfied by some model $\prob$. Again using the fact that $\exists\mathbb{R}$ is closed under $\mathsf{NP}$-reductions, we will provide a reduction of $\varphi$ to a formula $\psi \in \mathsf{ETR}$. Let $E$ contain all $\epsilon$ such that $\probsymbol(\epsilon)$ appears in $\varphi$. Then consider the system of equations
\begin{align*}
    \sum_{\delta \in \Delta_\varphi} \probsymbol(\delta) &= 1\\
    \sum_{\substack{\delta \in \Delta_\varphi\\ \delta \models \epsilon}}\probsymbol(\delta) &= \prob(\epsilon) \text{ for } \epsilon \in E.
\end{align*}
The measure $\prob$ satisfies the above system, so by Lemma~\ref{small-model}, the above system is satisfied by some model $\prob_{\text{small}}$ assigning positive probability to a subset $\Delta^+ \subseteq \Delta_\varphi$ of size at most $|E| \leq |\varphi|$. Thus adding to $\varphi$ the constraint $\sum_{\delta \in \Delta^+} \probsymbol(\delta) = 1$ and replacing each $\probsymbol(\epsilon)$ appearing in $\varphi$ with $\sum_{\delta \in \Delta^+: \delta \models \epsilon}\probsymbol(\delta)$ gives a formula $\psi$ belonging to $\mathsf{ETR}$ which has a model, namely $\prob_{\text{small}}$---and conversely, the mutual unsatisfiability of the $\delta \in \Delta^+$, together with the fact that they sum to unity, ensures that any model of $\psi$ is a model of $\varphi$. Further, the size constraints on $E$ and $\Delta^+$ ensure that $\psi$ can be formed in polynomial time.

Let us conclude the proof by showing that $\SAT_{\mathrm{prob}}^{\mathrm{cond}}$ is $\exists\mathbb{R}$-hard. To do this, consider an $\exists \mathbb{R}$-inverse problem instance $\varphi$ with variables $x_1,...,x_n$. It suffices to find in polynomial time a $\SAT_{\mathrm{prob}}^{\mathrm{cond}}$ instance $\psi$ preserving and reflecting satisfiability. We first describe the reduction and then show that it preserves and reflects satisfiability.
    
    Corresponding to the variables $x_1,...,x_n$, define fresh events $\delta_1,...,\delta_{n} \in \sigma(\mathrm{Prop})$. Define fresh, disjoint events $\delta_1^\prime,...,\delta_n^\prime$. Let $\psi$ be the conjunction of the constraints
\begin{align*}
    \frac{1}{n} \geq \mathbf{P}(\delta_i) &\geq  \frac{1}{4n} & \text{ for }i=1,...,n\\
    \mathbf{P}(\delta_i | \delta_j) = \textbf{P}(\delta_i) \land \mathbf{P}(\delta_i \land \delta_j) &= \frac{1}{4n^2} & \text{for } x_i \cdot x_j = 1 \text{ in } \varphi\\
    \textbf{P}(\delta_i^\prime) = \textbf{P}(\delta_i)\land \textbf{P}(\delta_j^\prime) = \textbf{P}(\delta_j)\land\textbf{P}(\delta_i^\prime \lor \delta_j^\prime) &= \mathbf{P}(\delta_k) &\text{for } x_i + x_j = x_k \text{ in } \varphi.
\end{align*}

The formula $\psi$ is not yet in $\mathcal{L}_{\text{cond}}^{\text{prob}
}$, since it contains constants of the form $\nicefrac{1}{N}$. Replace each constant $\nicefrac{1}{N}$ with $\mathbf{P}(\epsilon_N)$, requiring that the fresh events $\epsilon_1,...,\epsilon_N$ are disjoint with $\textbf{P}(\lor_i \epsilon_i)=1$ and $\textbf{P}(\epsilon_i) = \textbf{P}(\epsilon_j)$ for $i =1,...,N$.

This completes our description of the reduction. The map $x_i \mapsto x_i/2n$ sends satisfying solutions of $\varphi$ to those of $\psi$, and the inverse map $\mathbb{P}(\delta_i) \mapsto \mathbb{P}(\delta_i) \cdot 2n$ sends satisfying solutions of $\psi$ to those of $\varphi$. Further, the operations performed are simple, and the introduced events $\delta_i, \delta_i^\prime, \epsilon_i$ and the constraints containing them are short, so the reduction is polynomial-time.
\end{proof}

\section{Conclusion and Outlook} \label{section:conclusion}

We have shown that questions posed in probabilistic \emph{causal} languages can be systematically reduced to purely probabilistic queries, showing that the former are---from a computational perspective---no more complex than the latter. In fact, we demonstrated a kind of bifurcation between two classes of languages. On the one hand, languages encompassing at most \emph{addition} enjoy an $\NP$-complete satisfiability problem, whether the language is causal or not. However, as soon as we admit even a modicum of \emph{multiplication} into the language, causal and probabilistic languages become hard for the class $\ETR$, and even the full language of polynomials over (causal) probability terms is $\ETR$-complete. At the low end, this applies to a language with no explicit addition or multiplication, but just inequalities between conditional probability terms, or even simple independence statements for pairs of variables. As clarified in the resulting landscape of formal systems, we have identified an important sense in which causal reasoning is no more difficult than pure probabilistic reasoning. The substantial empirical and expressive gulf between causation and ``mere (statistical) association’’ is evidently not reflected in a complexity gap.

It should be acknowledged that, from the standpoint of inferential practice, questions of the form (\ref{mainequation}) constitute just one part of a larger methodological  pipeline. In some sense this is only a final stage in the process of going from an inductive problem to a deductive conclusion. The formulation of reasonable inductive assumptions can itself be an arduous task, as can translating those assumptions into a language like $\mathcal{L}_{\mathrm{prob}}$ or $\mathcal{L}_{\mathrm{causal}}$ (that is, into the set $\Gamma$). Take once again the example of do-calculus (Example \ref{ex:do}). The idea behind this method is that in many contexts investigators will be in a position to make reasonable \emph{qualitative} (viz. graphical) assumptions, perhaps justified by expert knowledge, to the effect that some variables are \emph{not} causally impacted in a direct way by certain other variables. Even when this method involves nothing more than assuming a specific causal (directed acyclic) graph, it may still take work to determine which causal-probabilistic statements are licensed by the graph. Many subtasks in this connection have been studied. For instance, determining whether three sets of variables in a graph stand in the so called \emph{d-separation} relation (which in turn guarantees conditional independence) is known to be very easy (it is \emph{linear time}; see, e.g., \citealt{Schachter}). Nonetheless, there are certainly other questions related to complexity that one might ask in this and other settings.

Moving beyond statistical and causal inference tasks narrowly construed, the results in this article raise a number of further research questions, both technical and conceptual. For instance, one can easily imagine versions of our causal languages in a multi-agent setting, with a (causal) probability operator $\probsymbol_a$ for multiple agents $a$. As has been widely recognized, strategic interaction routinely involves reasoning about causality and counterfactuals (see, e.g., \citealt{Stalnaker}). Existing formal proposals for capturing these styles of reasoning have been largely qualitative, with counterfactual patterns formalized using models of belief revision rather than structural causal models (see, e.g., \citealt{Board}). Whereas (``pure'') probability-logical languages have been thoroughly explored in the game theory literature (e.g., \citealt{HeifetzMongin}), the causal-probability-logical languages studied here would be quite natural to investigate in that context. 
Echoing our themes in the present article, what happens to computational complexity in this multi-agent setting, and specifically would a reduction to pure (multi-)probability would still be possible? 

In a more technical vein, there are natural questions about further extensions to even the most expressive languages we considered. To take just one example, much of probabilistic and causal reasoning employs tools from information theory like (conditional) entropy that in turn rely on logarithmic principles, or alternatively (via inversion), reasoning about \emph{exponentiation}. A major open problem in logic---known as Tarski's exponential function problem---is to determine whether the first-order theory of the reals with exponentiation is decidable. Short of that, one might hope to show that some of the weaker (causal-)probability languages studied here remain decidable, perhaps even of relatively low complexity, when exponentiation is added. However, for the strongest languages, such as $\mathcal{L}^\mathrm{poly}_{\mathrm{prob}}$, this may prove difficult. As \cite{Wilkie} have shown, decidability of the \emph{existential} theory of the reals with the unary function $e^x$ would already imply a positive answer to Tarski's problem. 

The reader will surely think of further questions and extensions pertaining to our work in this article. We hope that the systems, results, and methods offered here will be useful in these various directions moving forward, and more generally will help to catalyze further research at the fruitful intersection of logic, probability, causality, and complexity.

\bibliographystyle{apalike}
\bibliography{refs}{}

\end{document}